




\documentclass[a4paper,11pt]{article}
\usepackage[margin=1in]{geometry}
\usepackage{amsmath,amssymb}
\usepackage{hyperref}
\hypersetup{
	colorlinks=true, 
	linktoc=all,     
	linkcolor=blue,
	citecolor=red,  
}


\usepackage{graphicx}
\usepackage{amsmath}
\usepackage{amsthm}
\usepackage{booktabs}
\usepackage{algorithm}
\usepackage{algorithmic}
\usepackage[switch]{lineno}


\urlstyle{rm}  
\usepackage{mathtools}

\usepackage{amsthm}
\usepackage{caption} 
\usepackage{thmtools}
\usepackage{thm-restate}
\usepackage{todonotes}
\newtheorem{remark}{Remark}
\usepackage{newfloat}
\usepackage{listings}
\usepackage{tcolorbox}
\usepackage{multirow}
\usepackage{threeparttable}
\usepackage{framed}
\usepackage{xcolor}
\usepackage{xspace}
\usepackage{caption}
\usepackage{soul}

\author{
	Shiri Alouf-Heffetz\thanks{Ben-Gurion University, Beer Sheva, Israel. \texttt{shirihe@post.bgu.ac.il}}
	\and
	Tanmay Inamdar\thanks{
		Indian Institute of Technology Jodhpur, Jodhpur, India. \texttt{taninamdar@gmail.com}}
	\and
	Pallavi Jain\thanks{
		Indian Institute of Technology Jodhpur, Jodhpur, India. \texttt{pallavijain.t.cms@gmail.com}}
	\and
		Yash More\footnote{Indian Institute of Technology Gandhinagar, Gandhinagar, India. \texttt{yash.mh@iitgn.ac.in}}
	\and
	Nimrod Talmon\footnote{Ben-Gurion University, Beer Sheva, Israel. \texttt{talmonn@bgu.ac.il}
	\\Authors are specified in the alphabetical order of last names. 
	\\This research was done when T.~Inamdar was affiliated with University of Bergen and he acknowledges support from the European Research Council (ERC) under the European Union’s Horizon 2020 research and innovation programme (grant agreement No. 819416). P.~Jain acknowledges support from SERB-SUPRA grant number S/SERB/PJ/20220047 and IITJ Seed Grant grant I/SEED/PJ/20210119.} 
}

\date{}

\title{Controlling Delegations in Liquid Democracy}




\newcommand{\lr}[1]{\left(#1\right)}
\newcommand{\LR}[1]{\left\{#1\right\}}

\newcommand{\calR}{\mathcal{R}}
\newcommand{\calF}{\mathcal{F}}
\newcommand{\cost}{cost}
\newcommand{\ccrafull}{{\textsc{Constructive Control by Redirecting Arcs}}\xspace}
\newcommand{\poly}{{\textsf{P}}}
\newcommand{\npc}{{\textsf{NP}-complete}}
\newcommand{\nph}{{\textsf{NP}-hard}}
\newcommand{\fpt}{{\textsf{FPT}}}
\newcommand{\Wtwo}{{\textsf{W}[2]-hard}\xspace}
\newcommand{\xp}{{\textsf{XP}}}
\newcommand{\deleg}{\#{\texttt{delegations}}}
\newcommand{\app}{\#{\texttt{approvals}}}
\newcommand{\redirect}{\#{\texttt{redirections}}}

\newcommand{\ccra}{{\textsc{CCRA}}\xspace}
\newcommand{\vc}{{\textsc{Vertex Cover}}\xspace}

\newcommand{\defprob}[3]{
\begin{tcolorbox}[colback=gray!5!white,colframe=gray!75!black]
  \begin{tabular*}{\textwidth}{@{\extracolsep{\fill}}lr} #1   \\ \end{tabular*}
  {\bf{Input:}} #2  \\
  {\bf{Question:}} #3
  \end{tcolorbox}
}

\newtheorem{example}{Example}
\newtheorem{theorem}{Theorem}
\newtheorem{corollary}{Corollary}
\newtheorem{lemma}{Lemma}
\newtheorem{claim}{Claim}


\begin{document}


\maketitle 

\begin{abstract}
	In liquid democracy, agents can either vote directly or delegate their vote to a different agent of their choice. This results in a power structure in which certain agents possess more voting weight than others. As a result, it opens up certain possibilities of vote manipulation, including control and bribery, that do not exist in standard voting scenarios of direct democracy. Here we formalize a certain kind of election control -- in which an external agent may change certain delegation arcs -- and study the computational complexity of the corresponding combinatorial problem.
\end{abstract}

\section{Introduction}

Liquid democracy is an innovative approach to democratic governance that can be thought of as a middle-point between direct and representative democracy.  It has attracted significant attention both from practitioners and from academics, mainly as it offers greater flexibility to voters:
  essentially, voters participating in liquid democracy can not only choose how to fill their ballot, but they can choose instead of filling their ballot to delegate their vote to another voter of their choice.

More elaborately, unlike traditional models where agents either directly participate in the decision-making process by explicitly casting their vote -- as in direct democracy -- or elect representatives to make decisions on their behalf -- as in representative democracy, liquid democracy offers a hybrid system that allows for both direct and indirect participation. At its core, liquid democracy enables individuals to delegate their voting power to trusted proxies while retaining the option to vote directly on specific issues. This fluidity of participation holds the potential to enhance democratic engagement, facilitate more informed decision-making, and address the limitations of existing democratic systems.

One of the key advantages of liquid democracy indeed lies in the added flexibility it provides to voters. By allowing individuals to delegate their voting power to trusted proxies, liquid democracy empowers citizens to actively participate in decision-making even when they are unable to devote extensive time and effort to every issue. This flexibility enables individuals to delegate their votes to experts or representatives they trust on specific subjects, while still retaining the ability to directly vote on matters that are of particular importance to them. Importantly, this has the potential of increasing the quality of the decision making process.

Like any democratic system, however, liquid democracy is not immune to potential challenges, including the risk of voter manipulation. Given the fluid nature of delegation, there is the natural concern that influential or malicious actors could exert undue influence by strategically delegating votes or exploiting the trust placed in them as proxies. Such manipulation could undermine the principles of fairness and equality that are fundamental to the democratic process. The issue of the vulnerability of liquid democracy to certain different forms of manipulation has yet to receive sufficient attention from the research community.

Here we consider several forms of manipulation in the context of liquid democracy. In particular, we concentrate on the computational complexity of successfully conducting these manipulations. In particular, we are interested in the situation in which some external agent tries to rig the result of the election by redirecting few delegation arcs.

We define a corresponding combinatorial problem and study its computational complexity. We observe cases that allow for polynomial-time algorithms, cases that allow for \xp\ algorithms, parameterized (approximation) algorithms, as well as cases for which no efficient algorithms exist (assuming certain complexity theoretic hypothesis, such as $P \neq NP$).

We wish to stress that liquid democracy is currently utilized in practical settings for decisions of varying significance, from political realms like the German Pirate Party to financial contexts in various blockchain ventures. Its appeal lies both in its potential to enhance democratic participation as well as its promise to improve the quality of the decision making itself. Thus, we argue that, recognizing the limitations and vulnerability of liquid democracy to external forms of influence -- such as the ones we study in the current paper -- is essential for a better understanding and usage of liquid democracy.

\section{Related Work}

By now, liquid democracy has attracted significant attention from the research community. In particular, it is studied from a political science perspective~\cite{blum2016liquid,paulin2020overview}; from an algorithmic point of view~\cite{kahng2021liquid,dey2021parameterized,caragiannis2019contribution}; from a game-theoretic point of view~\cite{zhang2021power,bloembergen2019rational,escoffier2019convergence}; and from an agenda aiming at pushing its boundaries by increasing the expressiveness power that is granted to voters~\cite{brill2018pairwise,jain2022preserving,DBLP:conf/atal/KoppeKST24}.

Here we study election control in liquid democracy; in this context, we mention the extensive survey on control and bribery~\cite{faliszewski2016control}. Indeed, due to their importance, the problems of control and bribery are studied thoroughly~\cite{bartholdi1992hard,procaccia2007multi,bredereck2016complexity}.

To the best of our knowledge, however, the topic of election control and bribery in the context of liquid democracy was not yet studied. We do mention the work of \cite{zhang2021power} and its follow up paper~\cite{d2021computation}, in which agents may behave strategically. Note, however, that we are interested not in vote manipulation (by the community members themselves) but in election control (by an external agent).
Thus, from this point of view, our work extends the vast literature on control and bribery -- that is currently applied to settings such as single-winner elections and multiwinner elections -- to the setting of liquid democracy. And, as the most distinguishing feature of liquid democracy is its usage of vote delegations, we study the natural control action of altering the delegations themselves.

Another related line of research focuses on opinion diffusion in social networks. Following vote delegation chains in liquid democracy mirrors the process of tracking preferences in opinion diffusion. Notably, \cite{faliszewski2022opinion,abouei2020multi} provided insights into opinion propagation using a voting rule similar to our approach for resolving multi-delegations.

Finally, note that the type of liquid democracy we consider here is not only the standard type in which a voter may delegate their vote to a single other voter but also a version in which a voter may delegate to several other voters, in which case the votes of this set of voters is aggregated to compute the vote of the delegating voter. In this context, we are closer to more advanced types of liquid democracy~\cite{colley2022unravelling,golz2021fluid}.





\section{Control by Redirecting Arcs}

Next, we first position our work within the landscape of control problems for liquid democracy; and then describe our formal model.

\subsection{Liquid Democracy and Election Control}

Generally speaking, when considering election control in liquid democracy, there are several orthogonal factors to take into account:
\begin{itemize}

\item
The ballot type: e.g., whether ballots are binary, plurality, approval, ordinal, cumulative.

\item
The delegations type: e.g., whether delegations are transitive, coarse-grained or fine-grained~\cite{brill2018pairwise,jain2022preserving}, whether several delegations are possible for a single agent.

\item
The actions at the disposal of the controlling agent: e.g., whether the controlling agent can add edges, remove edges, redirect edges, change the ballots of some voters.

\item
The goal of the controlling agent: e.g., whether the goal is to make some predefined preferred candidate win the election.

\end{itemize}

Following the taxonomy above, it is worthwhile to note that here we consider: approval ballots; transitive coarse-grained ballots with possibly more than one delegate to each agent; a controlling agent that can only redirect edges and whose goal is to make some predefined preferred candidate a winner of the election after the controlling actions and the unraveling of the (partially modified) delegations. 

\subsection{Formal Model}

We describe our formal model.
Our formal model contains the following ingredients:
\begin{itemize}
\item A set of \emph{voters} $V = \{v_1, \ldots, v_n\}$.

\item A \emph{delegation graph} $G = (V, E)$, which is a directed graph where the voters are the vertices.
If the delegation graph has an out-degree of at most one,
then we sometimes refer to the situation as a ``single-delegation'' setting;
while, in general, it is a ``multi-delegation'' setting.
A voter $v_i$ with out-degree $0$ is said to be an \emph{active voter},
while a voter $v_i$ with out-degree strictly positive is said to be a \emph{passive voter}.

\item 
A cost function $\cost \colon E(G)\rightarrow \mathbb{N}$. The cost of an arc in the delegation graph $G$ is the cost of redirecting this arc.
\item
We consider approval elections, so there is a (global) set of candidates
$C = \{c_1, \ldots, c_m\}$ and each active voter $v_i$ corresponds to a ballot $b_i \subseteq C$.

\item
An \emph{unraveling function} $\mathcal{R}$ that takes several ballots and returns a ballot; it is used in the following way: the ballot of a passive voter delegation to active voters with ballots $b_1, \ldots, b_z$ is unravelled to the ballot $\mathcal{R}(b_1, \ldots, b_t)$, as explained next.) Note that, throughout the paper we consider several rules $\mathcal{R}$ as the unravelling function as described below. 
In particular,
we consider the following rules:
\begin{itemize}

\item
{\bf The union function:} here, $\mathcal{R}$ returns the union of the ballots it gets as input: $\mathcal{R}(b_1, \ldots, b_t) = \cup_{z \in [t]} b_z$.

\item
{\bf Approval function:} an approval function takes a ballot and returns the set of candidates approved by the maximum number of voters.

\item
{\bf GreedyMRC function}: Under GreedyMRC, we return a set of candidates obtained by the following method: we start with an empty set and perform a sequence of iterations, where in each iteration we (a) add to the current set, a set of candidates $S$ approved by the largest number of voters (all the candidates in $S$ are approved by the same number of voters) and (b) remove the voters that approve candidates in $S$ from further consideration.

\end{itemize}

Note that the choice of unraveling function is only relevant for multiple delegation.

\item
Given a delegation graph $G$ and an unraveling function~$\mathcal{R}$, we can define the \emph{unraveled vote} of each passive voter by following its delegations transitively; in particular, a passive voter $v_i$ with out-arcs to $u_1, u_2, u_3$ -- say, all of which are active voters -- will be assigned the unravelled ballot $b_i = \mathcal{R}(u_1, u_2, u_3)$; this happens transitively (i.e., at each level).
For simplicity, assume that there are no cycles.From real-world applications of liquid democracy, in particular in the context of the LiquidFeedback platform, we know that usually the number of cycles is rather limited in practice.
Let $b_i$ be the ballot of $v_i$ if $v_i$ is an active voter and its unravelled vote if $v_i$ is a passive voter.

\item
A \emph{voting rule} $\mathcal{W}$ that takes a set of (possibly some of which are unravelled) ballots and returns a candidate $c^\star \in C$ as the winner of the election. 
Note that, throughout the paper we consider \textbf{$\mathcal{W}$ as the Approval voting rule}, which returns the name of the candidate with the highest number of votes; we discuss other options in the Outlook section towards the end of the paper.

\item
An external agent -- referred to as \emph{the controller} -- who has a predefined budget of $k$ and who can change a set of arcs, say $S$, whose total cost is at most $k$ 
and whose goal is to make some predefined preferred candidate $c^\star$ a winner of the election where we first redirect the arcs in~$S$, 
then unravel all delegations using $\mathcal{R}$, then apply $\mathcal{W}$ on the (direct and unravelled) ballots, and then see whether $c^\star$ is in the winning committee $w$.

By redirecting an arc we mean taking an arc $(u, v)$, removing it from the graph, and inserting a different arc $(u, v')$.

\end{itemize}

So, formally, the problem we are considering in this paper can be defined as follows:

\defprob{\ccrafull \\ (\ccra)}{A delegation graph $G=(V,E)$ with costs function $cost$ with approval ballots over a set of candidates $C$, an unraveling function $\mathcal{R}$, a voting rule $\mathcal{W}$, a preferred candidate $c^\star$, and a budget $k$.}{Does there exist a delegation graph $G'$ that is achieved from $G$ by performing redirections within the budget $k$, such that, after unraveling all of the delegations of $G'$ using $\mathcal{R}$, $c^\star$ is the unique winner of the resulting election using $\mathcal{W}$?}

We denote the number of delegation as \deleg\  and the number of approvals as \app.

The following example illustrates our model. 
\begin{example} Consider the delegation graph in Figure~\ref{fig:example} on the voter set $\{v_1,v_2,v_3,v_4,v_5,v_6\}$ and the candidate set $\{c_1,c_2,c_3,c^\star\}$, where $c^\star$ is the preferred candidate. Let the cost of every arc be $1$. Consider union function as the unraveling function $\mathcal{R}$ and approval voting as the voting rule $\mathcal{W}$. In Figure~\ref{fig:example}(a), since $v_2$ delegates to $v_3$ and $v_4$, the vote of $v_2$ is $\{c_1,c_2,c^\star\}$. Similarly, the vote of $v_5$ is $\{c_1,c_2\}$ and the vote of $v_1$ is $\{c_1,c_2,c_3,c^\star\}$. Since $c_1$ is approved by the maximum number of voters, $c_1$ is the winner in the delegation graph in Figure~\ref{fig:example}(a). We redirect the arc $(v_5,v_3)$ to $(v_5,v_6)$ as depicted in Figure~\ref{fig:example}(b). Now, $c^\star$ is approved by $5$ voters (all but $v_3$ approves $c^\star$) and no other candidate is approved by $5$ voters. Thus, $c^\star$ is the unique winner in the delegation graph in Figure~\ref{fig:example}(b).
\begin{figure}
    \centering
    \includegraphics[scale=0.4]{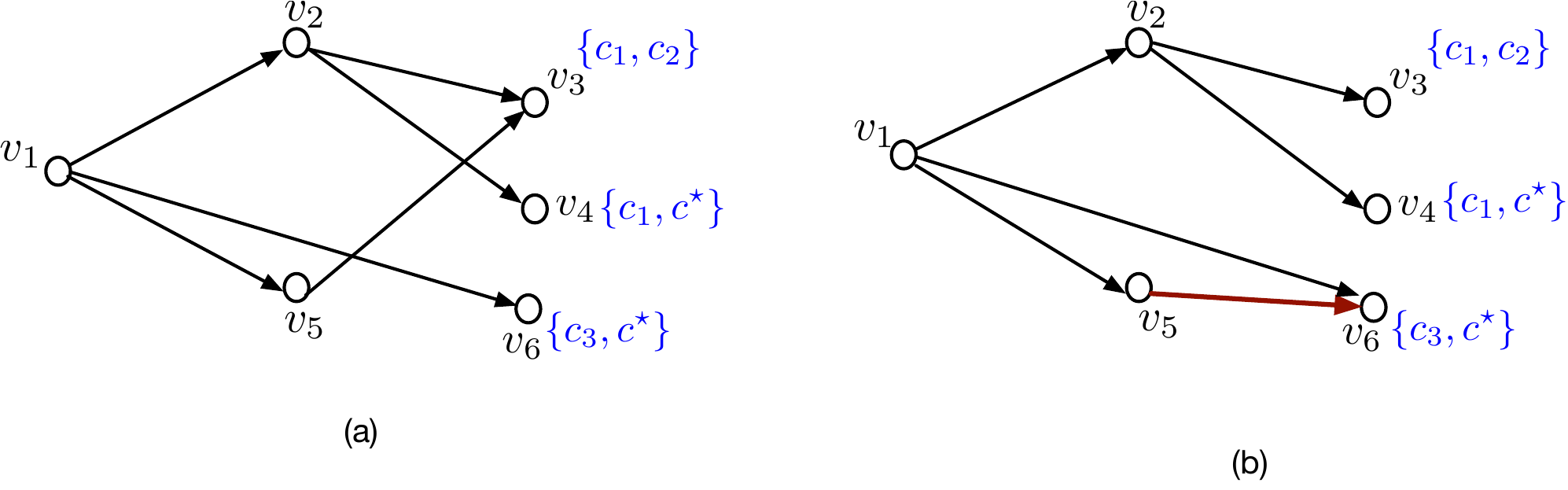}
    \caption{(a) Input delegation graph, (b) Delegation graph after redirecting arc $(v_5,v_3)$ to $(v_5,v_6)$ }
    \label{fig:example}
\end{figure}
\end{example}

Table~\ref{tab:classical_complexity} and Table~\ref{tab:parameterized_complexity} summarise the classical and parameterized complexity of our problem, respectively.

\begin{table*}[ht]
\caption{Computational Complexity of \ccra}
\centering
\footnotesize
\begin{tabular}{c c | c c}
 \hline
 \multicolumn{2}{c}{\bf \#delegations=1} & \multicolumn{2}{c}{\bf \#delegations $\ge$ 1} \\
 \hline
 \#approvals = 1 & \#approvals $\ge$ 1 & \#approvals = 1 & \#approvals $\ge$ 1 \\
 \hline
 \poly (Thm~\ref{thm:ooly-single-deleg}) & \npc (Thm~\ref{thm:nph-single-deleg}) & \npc (Thm~\ref{thm:nph-general}) & \npc (Thm~\ref{thm:nph-general}) \\
 \hline
\end{tabular}
\vspace{0.1in}
\label{tab:classical_complexity}
\end{table*}

\begin{table}[ht]
\begin{threeparttable}[b]
\caption{Parameterized Complexity of \ccra. Here, $\delta^+(G)$ is the maximum out-degree of a vertex in $G$, $\delta^-(G)$ is the maximum in-degree of a vertex in $G$, and $\lambda(G)$ is the maximum length of a path in the delegation graph $G$. ? denote that the complexity is open.}
\centering
 \footnotesize
\begin{tabular}{c| c | c c}
 \hline
  \multirow{2}{*}{Parameters} & {\bf \#delegations=1} & \multicolumn{2}{c}{\bf \#delegations $\ge$ 1} \\
  \cline{2-4}
  & \#approvals $\ge$ 1 & \#approvals = 1 & \#approvals $\ge$ 1 \\
 \hline
 \#voters ($n$) & \fpt (Thm~\ref{thm:fpt-n}) & \fpt (Thm~\ref{thm:fpt-n}) & \fpt (Thm~\ref{thm:fpt-n}) \\
 \#active voters ($t$) & \xp (Thm~\ref{thm:xp-active}),  {\sf FPT-AS}\tnote{a}  (Thm~\ref{thm:fpt-as}) & ? & ? \\
 \#candidates ($m$) & \xp,  {\sf FPT-AS}\tnote{b} (Cor~\ref{cor:xp-m}) & ? & ? \\
 $\delta^+(G)+\delta^-(G)+\app$ & para\nph (Thm~\ref{thm:nph-single-deleg}) & para\nph (Thm~\ref{thm:nph-general})& para\nph (Thm~\ref{thm:nph-general}) \\
 $\delta^+(G)+\lambda(G)+\app$ & para\nph (Thm~\ref{thm:nph-single-deleg}) & para\nph (Thm~\ref{thm:nph-general})& para\nph (Thm~\ref{thm:nph-general}) \\
 $k+\redirect$ & $\mathsf{W}[2]$-hard\tnote{c}      (Thm~\ref{thm:nph-single-deleg}) & $\mathsf{W}[2]$-hard\tnote{d} (Thm~\ref{thm:nph-general}) & $\mathsf{W}[2]$-hard\tnote{d} (Thm~\ref{thm:nph-general}) \\
 \hline
\end{tabular}
\vspace{0.1in}
\label{tab:parameterized_complexity}
 \begin{tablenotes}
       \item[a] The parameter is $t+\epsilon$. 
       \item[b] The parameter is $m+\epsilon$.
       \item[a,b] For both the results, we assume that there is an active voter who either approves only $c^\star$ or any subset of candidates that excludes $c^\star$. 
       \item[d] The result holds even when $\delta^+(G)+\delta^-(G)+\app$ is constant or $\lambda(G)+\app$ is constant. 
       \item[e] The result holds even when $\delta^-(G)+\app$ is constant or $\lambda(G)+\app$ is constant. 
     \end{tablenotes}
\end{threeparttable}
\end{table}

\begin{remark}
Note that $\mathcal{W}$ can be any single-winner voting rule operating on approval ballots. Similarly, $\mathcal{R}$ can be any committee selection rule with variable number of winners (usually referred to as a VNW rule)~\cite{vnw}. As such, it means that our model is general enough to apply any VNW rule to its unravelling procedure and any single-winner voting rule as its winner selection procedure.
\end{remark}


\begin{remark}\label{remark:arbitrarygraph}
In many works on liquid democracy there is an underlying social graph connecting the voters and voters can only delegate to their corresponding friends. Note that, in our setting, a voter can delegate to any other voter; so, put differently, we study the special case in which the underlying social graph is a clique. Since we study a special case, our hardness results hold also for the general case with an arbitrary underlying social graph. 
\end{remark}


\section{Hardness Results}

We begin with our results regarding the computational hardness of the \ccra problem. Both our {\sf NP}-hardness results are due to the polynomial-time reduction from the \vc problem in cubic graphs and the ${\sf W[2]}$-hardness results are due to the same reduction from the {\sc Hitting Set} problem.   

In the \vc problem in cubic graphs, given a cubic graph $G=(V,E)$ (the degree of every vertex is three), and an integer $\tilde{k}$, the goal is to find a set $S\subseteq V(G)$ of size at most $\tilde{k}$ such that at least one of the endpoint of every edge is in $S$. 
The problem is known to be \nph~\cite{DBLP:conf/stoc/GareyJS74}. In the {\sc Hitting Set} problem, given a universe $U$, a family, ${\calF}$, of subsets of $U$, and an integer $\tilde{k}$, the goal is to find a set $S\subseteq U$ such that for every set $F\in {\calF}$, $S\cap {\calF} \neq \emptyset$. Note that {\sc Hitting Set} is a generalisation of the \vc problem, and it is known to be \Wtwo with respect to the parameter $\tilde{k}$~\cite{DBLP:journals/siamcomp/DowneyF95}.

We begin with our first hardness result.
\begin{restatable}{theorem}{hardnessone} \label{thm:nph-general}
    \ccra is \nph\ when $\mathcal{R}$ is union function or approval function or GreedyMRC funtion even when  
    \begin{enumerate}
        \item every vertex in the delegation graph satisfies one of the following conditions:
        \begin{enumerate}
         \item out-degree is at most three, the maximum length of a path in the delegation graph is at most two, and the cost of arcs belong to $\{1,2\}$
         \item out-degree is at most two and in-degree is at most two 
        \end{enumerate}
        \item size of approval set is one
        \item only one voter approves $c^\star$ 
    \end{enumerate}
    Furthermore, it is \Wtwo\ with respect to $k+\redirect$ with all the above constraints except that in $1(a)$ out-degree is not bounded by a constant.
\end{restatable}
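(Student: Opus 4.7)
The plan is to prove both the \nph\ and \Wtwo\ claims via polynomial-time reductions that share a common gadget template, using Vertex Cover on cubic graphs for the \nph\ part and Hitting Set for the \Wtwo\ part. Given a cubic graph $G=(V,E)$ and integer $\tilde{k}$, I would construct the \ccra\ instance on the candidate set $\{c^\star\}\cup\{c_v:v\in V\}$ as follows: one active voter $v^\star$ approving $\{c^\star\}$ (so only one voter approves $c^\star$); for each $v\in V$ a block of $N$ padding active voters approving $\{c_v\}$, of which one is distinguished as $f_v$; a passive voter $r_v$ with a single outgoing arc $(r_v,f_v)$ of cost $1$; and for each edge $e=(u,v)\in E$ a passive voter $y_e$ with arcs $(y_e,r_u)$ and $(y_e,r_v)$ of cost $2$ (so arc costs lie in $\{1,2\}$). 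Fix $k=\tilde{k}$ and $N=|E|+\tilde{k}-4$. The resulting graph satisfies condition~$1(a)$ (out-degree at most $3$, path length at most $2$, costs in $\{1,2\}$), and conditions~$2$ and~$3$ hold by construction.

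The intended correspondence is that a vertex cover $S\subseteq V$ of size $\tilde{k}$ gives the solution that replaces $(r_v,f_v)$ by $(r_v,v^\star)$ for each $v\in S$, with total cost $\tilde{k}$. A direct vote count (identical under the union, approval, and GreedyMRC rules, since every ballot involved is a singleton) yields $1+|S|+|E|$ votes for $c^\star$ (one per redirected $r_v$, one per $y_e$ since $S$ is a cover) and at most $N+4=|E|+\tilde{k}$ votes for each $c_v$, so $c^\star$ uniquely wins. For the converse, a cost-benefit analysis distinguishes the two arc types: a cost-$1$ vertex-arc redirection buys at most one unit of $c^\star$-gain per unit cost, while a cost-$2$ edge-arc redirection buys at most half a unit, so any winning strategy must spend its entire budget on $\tilde{k}$ vertex-arc redirections; moreover, if the chosen vertex set misses any edge, the corresponding $c_v$ ties $c^\star$, forcing the chosen set to be a vertex cover.

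For condition~$1(b)$ I would modify the construction to control in-degrees: for each $v\in V$ introduce intermediate passive voters $t_v^1,t_v^2$ delegating to $r_v$ and route the three edge voters incident to $v$ through them (two via $t_v^1$, one via $t_v^2$), so that every in-degree is at most $2$. All arcs now carry unit cost, and the padding count, budget, and threshold are rescaled accordingly. The key arithmetic observation is that a full $(r_v,f_v)$-redirection cascades through $r_v$ and both $t_v^?$'s (giving a large $c^\star$-gain per unit cost), while a partial redirection of an arc out of some $t_v^?$ yields a strictly smaller gain; this forces any optimal strategy to spend its entire budget on vertex arcs and then, as in case~$1(a)$, to select a vertex cover. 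For the \Wtwo\ argument the same template is applied with Hitting Set in place of Vertex Cover: each $u\in U$ becomes a vertex gadget and each set $F\in\mathcal{F}$ becomes a passive voter $y_F$ with an arc to every $r_u$ with $u\in F$; the out-degree $|F|$ of $y_F$ is unbounded, exactly matching the caveat that the out-degree bound in $1(a)$ is dropped for \Wtwo. Since each redirection costs at least $1$, at most $\tilde{k}$ arcs are ever redirected, so the parameter $k+\redirect$ stays $O(\tilde{k})$.

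The most delicate part of the argument will be the backward direction under case~$1(b)$, where a cunning strategy might combine partial redirections of $t_v^?$-arcs with edge-arc redirections in an attempt to cover edges more cheaply than using full $r_v$-redirections. Ruling this out requires a careful per-unit-cost accounting of $c^\star$-gain versus the threshold set by the padding $N$, using the cubic structure of $G$ to bound the coverage achievable by any partial strategy. The cost-$2$ penalty on edge arcs in case~$1(a)$ sidesteps this subtlety entirely, so that case is considerably easier.
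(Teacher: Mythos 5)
Your construction is a legitimate alternative skeleton to the paper's (the paper makes candidates correspond to \emph{edges} of $G$ and has vertex chains $v \to v'$ delegating into edge-voters, whereas you make candidates correspond to \emph{vertices} and detect coverage by edge-voters $y_e$ \emph{gaining} $c^\star$), and your forward direction checks out: $c^\star$ gets $1+\tilde{k}+|E|$ votes versus at most $N+4 = |E|+\tilde{k}$ for any $c_v$. But your backward direction for case 1(a) rests on a quantitatively false step. You claim a cost-$1$ vertex-arc redirection ``buys at most one unit of $c^\star$-gain per unit cost''; in your own gadget, redirecting $(r_v,f_v)$ to $v^\star$ gains up to \emph{four} votes for $c^\star$ ($r_v$ itself plus the three incident $y_e$'s, whose union-unraveled ballots now contain $c^\star$). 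Indeed, your forward count $1+\tilde{k}+|E|$ already contradicts the ``one unit per unit cost'' bound, which would cap $c^\star$ at $1+\tilde{k}$. The per-unit comparison therefore does not establish that the budget must be spent entirely on vertex arcs. The conclusion is still reachable, but by a different route: WLOG $\tilde{k}<|V|$, so after normalizing all redirections into $v^\star$ (as the paper also does), some vertex $w$ is untouched by the at most $\tilde{k}$ redirections, whence $c_w$ retains score exactly $N+4=|E|+\tilde{k}$; beating it forces $\mathrm{score}(c^\star)=1+a+Y \ge |E|+\tilde{k}+1$ with $a$ vertex-arc and $b$ edge-arc redirections, $a+2b\le\tilde{k}$, $Y\le|E|$, which pins $b=0$, $a=\tilde{k}$, $Y=|E|$, i.e., a vertex cover. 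This threshold argument plays the role of the paper's Claim~1 ($2\alpha+2\beta+\gamma \le k$ versus $\alpha+2\beta+\gamma \ge k+1$) and you need to supply it, or something like it, explicitly.

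For case 1(b) the gap is more serious: you explicitly defer the key step (ruling out mixtures of partial $t_v^?$-arc redirections and edge-arc redirections under all-unit costs), and it is exactly the step most likely to fail, since a cost-$1$ redirection of $(t_v^1,r_v)$ covers two edges at once and genuine fractional-cover cheating must be excluded by delicate arithmetic. You have also made the case needlessly hard on yourself: condition 1(b) constrains only the degrees, not the costs, and the paper exploits this by setting every arc other than the $(u,u')$ vertex arcs to cost $k+1$, so that only vertex-arc redirections are feasible and the correctness reduces to the 1(a) argument verbatim. Adopting that prohibitive-cost trick would close your 1(b) hole immediately. Your \Wtwo\ adaptation via {\sc Hitting Set} (unbounded out-degree of $y_F$, parameter $k+\redirect$ bounded by $O(\tilde{k})$ since every redirection costs at least $1$) matches the paper's and is fine, but it inherits the same two gaps until they are repaired.
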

\begin{proof}
We focus the proof for the union function. The proof is exactly the same for other two unraveling functions as well since in our gadget the set of candidates obtained by all these unraveling functions is same.
 
    We give a polynomial time reduction from the \vc problem in cubic graphs.
    Let $(G,\tilde{k})$ be an instance of \vc such that $G$ is a cubic graph. We construct an instance of \ccra that satisfies constraints $1(a), 2$, and $3$ in the theorem statement. We will discuss later the required modifications for constraint $1(b)$. The construction is as follows. 
    \begin{itemize}
        \item For every edge $e\in E(G)$, we have a candidate $c_e$. Additionally, we have a candidate $c^\star$ (our preferred candidate).
        \item For every vertex $v\in V(G)$, we have two passive voters $v$ and $v'$, and $v$ delegates to $v'$. The cost of this arc is $1$.
        \item  For every edge $e\in E(G)$, we have an active voter $e$ who approves the candidate $c_e$. If $u$ is an endpoint of $e$, then $u'$ delegates to $e$ and the cost of this arc is $2$. 
        \item  We add a special voter $v^\star$ who approves the candidate $c^\star$.
        \item For every edge $e\in E(G)$, we add a set of dummy voters $D^e=\{d^e_1,\ldots,d^e_{k-5}\}$. Each dummy voter in $D^e$ delegates to the active voter $e$ and the cost of this arc is $2$.
        \item The budget $k$ is $\tilde{k}$.
    \end{itemize}
Let $H$ be the delegation graph that we constructed. Note that in this graph the score of $c^\star$ is $0$ and $k$ for every other candidate. Figure~\ref{fig:thm1-1} illustrates the construction. 

\begin{figure}[ht]
    \centering
    \includegraphics[scale=0.55]{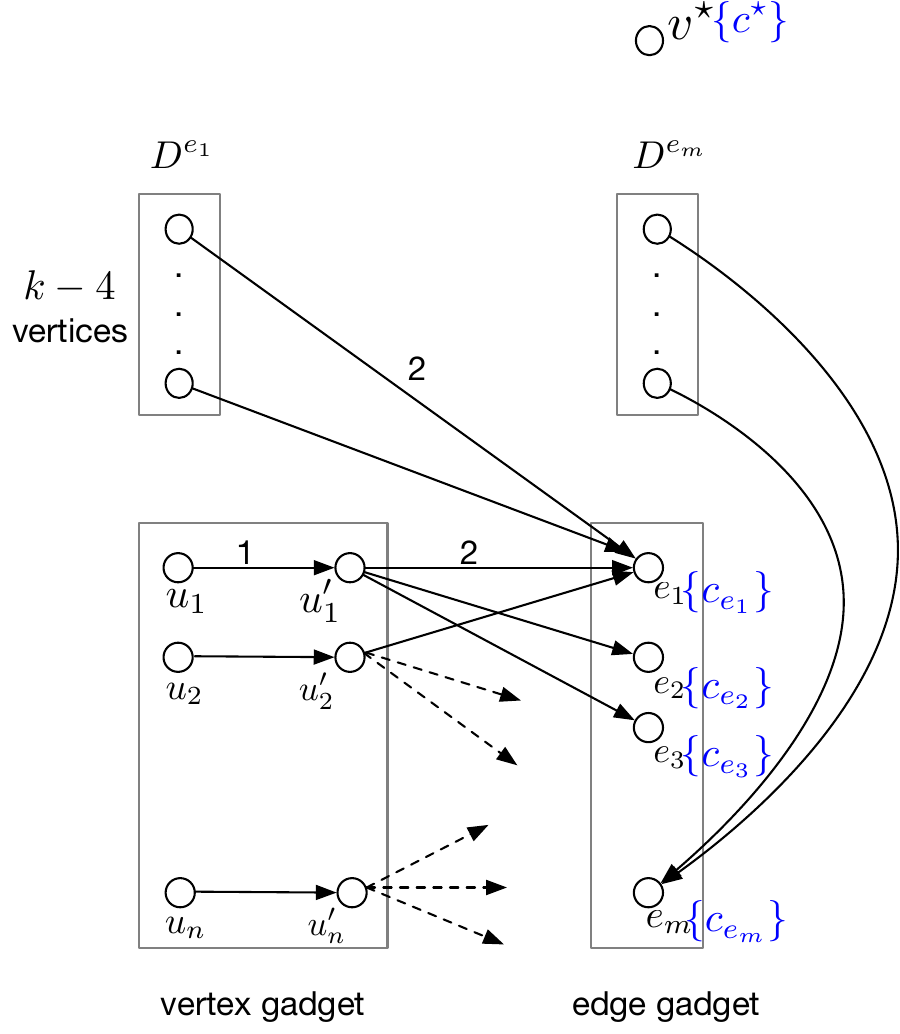}
    \caption{Illustration of {\sf NP}-hardness claimed in Theorem~\ref{thm:nph-general} with constraints 1(a), 2, and 3. The dotted arcs are to illustrate that the out-degree of every $u'$ is three. Here, $e_1=u_1u_2$ is an edge in $G$. An approved candidate by an active voter is written in the {\color{blue}blue} color next to the voter name. The number on the top of an edge is the cost of redirection.}
    \label{fig:thm1-1}
\end{figure}

Next, we prove the correctness. 
 In the forward direction, suppose that $S$ is a solution to $(G,k)$. Without loss of generality, we assume that $|S|=k$. If $v\in S$, then we redirect the arc $(v,v')$ to $(v,v^\star)$. Since the cost of arc $(v,v')$ is $1$ and $|S|= k$, the total cost of redirection is $k$. Now, we prove that $c^\star$ is the unique winner in the delegation graph $H^\star$ obtained after redirections. For every vertex $v\in S$, we have an arc $(v,v^\star)$ in $H^\star$. Furthermore, since $v$ is the source vertex (vertex with in-degree zero) in the delegation graph, the vote of $c^\star$ is $k$. For every edge $e(=uv) \in E(G)$, $u'$ and $v'$ delegates to $e$. Since either $u$ or $v$ is in $S$, the vote of candidate $c_e$ decreases by at least $1$, hence, the score of $c_e$ in $H^\star$ is at most $k-1$ ($k-5$ votes are due to dummy voters). Thus, $c^\star$ is the unique winner in $H$. 

 In the reverse direction, suppose that $H^\star$ is the delegation graph obtained after redirecting arcs within the budget such that $c^\star$ is the unique winner. Let $S$ be the set of arcs that are redirected in $H$ to obtain $H^\star$. We can safely assume that all the arcs are redirected to $v^\star$ because if an arc is redirected to some other voter say $v$, then instead of $v$, redirecting it to $v^\star$ increases the score of $c^\star$ and decrease the score of other voters. We modify $S$ a bit as follows. Suppose that an arc $(d,e)\in S$, and $(u',e) \notin S$, where $u'$ is a voter corresponding to the vertex $u\in V(G)$, which is an endpoint of the edge $e\in E(G)$. Then, delete $(d,e)$ from $S$ and  add $(u',e)$ in $S$. Note that the cost of both the arcs is same, so there the total cost remain same. Redirection of arc $(d,e)$ decreases the score of the candidate $c_e$ by $1$ and increase the score of $c^\star$ by $1$. However, redirection of arc $(u',e)$ decreases the score of the candidate $c_e$ by at least $1$ and increase the score of $c^\star$ by at least $1$ (``at least" because we might have already redirected arc $(u,e)$). Thus, $S$ is still a solution. We can safely assume that if an arc $(u',e)$ is redirected, then $(u,u')$ is not redirected as it only contributes in the cost, but does not increase/decrease the score. 

 We construct a set $S'\subseteq V(G)$ as follows: if an edge incident to $u'$ ($(u,u')$ or $(u',e)$) is in $S$, add the vertex $u\in V(G)$ in $S'$. Next, we prove that $S'$ is a vertex cover of $G$ of size at most $k$. Let $\alpha, \beta, \gamma$ be the number of arcs in $S$, that are incident on dummy voters, arcs of {\emph type} $(u',e)$, and arcs of \emph{type} $(u,u')$. Thus, the cost of redirection is $2\alpha+2\beta+\gamma$ and the score of $c^\star$ is $\alpha+2\beta+\gamma$. 

 \begin{claim}\label{clm:nph}
     By redirecting arcs in $S$, the score of every candidate $c_e$, where $e\in E(G)$, reduces by at least $1$.
 \end{claim}

 \begin{proof}
     Note that in the delegation graph $H$, the score of every candidate $c_e$, where $e\in E(G)$, is $k$ and the score of $c^\star$ is $0$. If there exists a candidate whose score is $k$ in the delegation graph $H^\star$, then the score of $c^\star$ is at least $k+1$. Thus, $\alpha+2\beta+\gamma > k$. Hence, $2\alpha+2\beta+\gamma > k$, a contradiction to the fact that the cost of redirection is at most $k$.
 \end{proof}

Due to Claim~\ref{clm:nph} and the modified $S$, we know that for every voter $e$, either $(u',e)\in S$ or $(u,u')\in S$ (not both), where $u$ is an endpoint of edge $e\in E(G)$. Thus, due to the construction of $S'$, for every edge $e\in E(G)$, at least one of its endpoint is in $S'$. We next claim that the size of $S'$ is at most $k$. For every vertex $u\in V(G)$, either $(u,u')\in S$ or $(u',e)$, not both. Thus, $|S'|=\beta + \gamma$. Since $2\alpha+2\beta+\gamma \leq k$, it follows that $\beta+\gamma \leq k$. Hence, $|S'|\leq k$. 

This completes the proof of Theorem~\ref{thm:nph-general} with constraints 1(a), 2, and 3. 

Next, we mention the required modification to decrease the in-degree and out-degree of the delegation graph. Figure~\ref{fig:thm1-2} illustrates these modifications. Now, every vertex $u\in V(G)$, we add four passive voters $u,u',\hat{u}$, and $\tilde{u}$; $u$ delegates to $u'$ as earlier, and $u'$ delegates to $\hat{u}$ and $\tilde{u}$. For every edge $e\in E(G)$, we have a passive voter $e$ and an active voter $e'$ who votes for the candidate $c_e$. The passive vote $e$ delegates to $e'$. Let $e_1,e_2,e_3$ be the edges incident to the vertex $u$ in $G$. Then, $\hat{u}$ delegates to $e_1$ and $e_2$, and $\tilde{u}$ delegates to $e_3$. In particular, $\hat{u}$ delegates to two passive voters corresponding to two edges incident to $u$ in $G$, and $\tilde{u}$ delegates to the passive voter corresponding to the third edge incident to $u$ in $G$. Note that the score of $c_e$ in $H$ was $k$, where $e\in E(G)$. Thus, to meet the same score, we add $k-8$ dummy passive voters in $D^e$. Let $D^{e}=\{d^e_1,\ldots,d^e_{k-8}\}$. Then, for every $i\in [k-7]$, $d^e_i$ delegates to $d^e_{i+1}$, and $d^e_{k-8}$ delegates to $e'$. The budget is same as earlier, i.e., $k=\tilde{k}$. To ensure that the arc $(e,e')$ is not redirected, we set the cost of this arc as $k+1$, for every $e\in E(G)$. Furthermore, we set the cost of arc $(u,u')$ as $1$ and the remaining arcs as $k+1$. The correctness is same as earlier. Note, that now we can only redirect the arcs of type $(u,u')$ within the budget. 

\begin{figure}[ht]
    \centering
    \includegraphics[width=8cm,height=5.5cm]{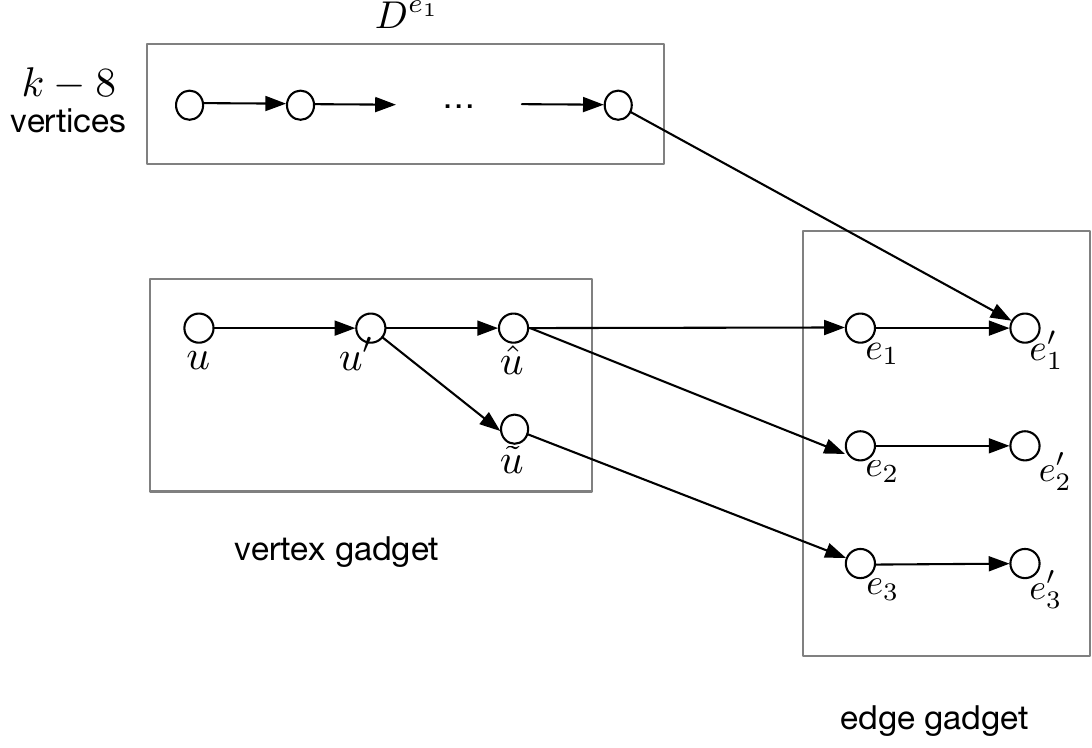}
    \caption{Modifications in Figure~\ref{fig:thm1-1} in the vertex gadget, edge gadget, and dummy vertices for {\sf NP}-hardness of Theorem~\ref{thm:nph-general} with constraint 1(b). Here, $u$ is an endpoint of $e$ in $G$.}
    \label{fig:thm1-2}
\end{figure}
Instead of \vc, if we give the same reduction from the {\sc Hitting Set} problem, we obtain the claimed {\sf W}[2]-hardness. 
\end{proof}

Since the problem is \nph\ even for two delegations, next we study the problem for single delegation. Unfortunately, we again have a negative result. 

\begin{restatable}{theorem}{hardnesstwo}\label{thm:nph-single-deleg}
     \ccra is \nph  \ when ${\calR}$ is union function or approval function or GreedyMRC function, $\deleg = 1$ even when
     \begin{enumerate}
         \item the delegation graph satisfies one of the following conditions: 
         \begin{enumerate}
         \item the maximum length of a path in the delegation graph is at most one and  the cost of all the arcs is $1$
        \item every vertex in the delegation graph has in-degree at most one
         \end{enumerate}
        \item size of approval set is at most three
        \item only one voter approves $c^\star$ 
     \end{enumerate}
     Furthermore, it is \Wtwo\ with respect to $k+\redirect$ even when $\deleg=1$ and condition $1$ and $3$ holds.
\end{restatable}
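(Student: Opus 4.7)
The plan is to adapt the strategy of Theorem~\ref{thm:nph-general} to the single-delegation regime by collapsing its two-layer vertex gadget into a single arc per vertex. I would reduce from \vc\ on cubic graphs for the \nph\ claim and from \textsc{Hitting Set} for the \Wtwo\ claim, so elements/sets will substitute for vertices/edges in the latter case.

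Given a cubic instance $(G,\tilde{k})$ of \vc, I would construct the \ccra instance as follows. The candidates are $\{c_e : e \in E(G)\} \cup \{c^\star\}$. Introduce a single voter $v^\star$ whose ballot is $\{c^\star\}$, so that condition~(3) holds. For every $v \in V(G)$, add an active voter $a_v$ with ballot $\{c_e : e \ni v\}$ -- of size exactly three, since $G$ is cubic, yielding condition~(2) -- and a passive voter $p_v$ that delegates to $a_v$ via a single arc of unit cost. For every edge $e \in E(G)$, add $D := \tilde{k}-3$ dummy active voters approving only $c_e$. The budget is $k := \tilde{k}$. Both sub-conditions of~(1) hold simultaneously for this instance: the only arcs are the $p_v\to a_v$ arcs, of length one and unit cost (giving $1(a)$); and $a_v$ has in-degree exactly one while all other voters have in-degree zero (giving $1(b)$).

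The correctness analysis is a clean two-direction check. The initial scores are $1$ for $c^\star$ and $\tilde{k}+1$ for each $c_e=uw$ (two from $a_u,a_w$, two from the unraveled $p_u,p_w$, and $D$ from the dummies). In the forward direction, a vertex cover $S$ of size $\tilde{k}$ is encoded by redirecting $p_v \mapsto v^\star$ for every $v \in S$: the cost is $\tilde{k}$, $c^\star$ rises to $1+\tilde{k}$, and each $c_e=uw$ drops to $\tilde{k}+1 - [u\in S] - [w\in S] \le \tilde{k}$, so $c^\star$ wins uniquely. In the reverse direction, I would first show that sending a redirected arc to any target other than $v^\star$ is dominated; since the only redirectable arcs are the $p_v \to a_v$ arcs, any solution corresponds to some $S \subseteq V(G)$ with $|S| \le \tilde{k}$, and the inequality $1+|S| > \tilde{k}+1 - [u\in S]-[w\in S]$ for each edge forces $S$ to be a vertex cover.

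The main obstacle is calibrating the dummy count $D=\tilde{k}-3$ so that the inequality is tight exactly at the cover threshold, together with handling the degenerate cases $\tilde{k}\le 3$ by direct inspection or by padding $G$ with a constant-size cubic component. For the \Wtwo-hardness in $k+\redirect$, the same template is run on a \textsc{Hitting Set} instance $(U,\mathcal{F},\tilde{k})$: elements replace vertices, sets replace edges, $a_u$ approves $\{c_F : u \in F\}$ -- whose size may now be unbounded, which is permitted since condition~(2) is dropped from the \Wtwo\ statement -- and the dummy counts $D_F$ are tuned to balance the initial $c_F$-scores. Since any valid solution performs at most $\tilde{k}$ redirections, we get $k+\redirect=O(\tilde{k})$, yielding the parameterized reduction.
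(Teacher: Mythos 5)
Your proposal is correct for the \nph\ claims and takes essentially the paper's route: a reduction from \vc\ on cubic graphs with a candidate $c_e$ per edge, a single unit-cost arc per vertex whose redirection to $v^\star$ encodes cover membership, and dummy voters calibrating every $c_e$ to an initial score of $\tilde{k}+1$ against $c^\star$'s score of $1$, so that making $c^\star$ the unique winner within budget $\tilde{k}$ forces every $c_e$ to lose a vote. Your one real deviation is making the dummies \emph{active} voters approving $c_e$, rather than (as in the paper) passive voters delegating into a per-edge active voter. This buys two things: the only redirectable arcs are the vertex arcs $p_v \to a_v$, so the paper's exchange step (trading a redirected dummy arc for a vertex arc of equal cost) disappears from the reverse direction; and a single construction satisfies conditions $1(a)$ and $1(b)$ simultaneously, whereas the paper needs a second construction that chains the dummies into a path with arcs of cost $k+1$ to force in-degree one. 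Your score bookkeeping ($1$ vs.\ $\tilde{k}+1$, with $D=\tilde{k}-3$ dummies) is consistent, and the degenerate cases $\tilde{k}\le 3$ you flag are harmless since \vc\ with constant $\tilde{k}$ is polynomial.

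The \Wtwo\ part of your proposal, however, is only as detailed as the paper's one-line adaptation and inherits a genuine calibration gap that you should not wave off as mere ``tuning'': equalizing each $c_F$ at initial score $\tilde{k}+1$ requires $D_F = \tilde{k}+1-2|F| \ge 0$, which is impossible when $2|F| > \tilde{k}+1$ (and \textsc{Hitting Set} remains \Wtwo\ only with unbounded set sizes, so such sets cannot be assumed away). Worse, in your gadget any set $F$ with $|F| > \tilde{k}$ makes the \ccra\ instance a NO instance regardless of dummies---each of the at most $\tilde{k}$ unit-cost redirections lowers $c_F$ by only one while $c^\star$ can reach at most $1+\tilde{k}$---even though the \textsc{Hitting Set} instance may be a YES instance (a single element can hit a huge set). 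A fix within your framework is to equalize all $c_F$ at $M = \max_F 2|F|$ via $D_F = M - 2|F|$ and raise $c^\star$'s initial unraveled score to $M-\tilde{k}$ by adding $M-\tilde{k}-1$ passive voters delegating directly to $v^\star$ (their arcs are useless to redirect, and only $v^\star$'s ballot contains $c^\star$, preserving conditions $1(a)$ and $3$). The paper's own text glosses over the same point, so on this part you match the paper rather than fall short of it.
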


\begin{proof}
      We focus the proof for the union function. The proof is exactly the same for other two unraveling functions as well since in our gadget the set of candidates obtained by all these unraveling functions is same.
     
We give a polynomial time reduction from the \vc problem in cubic graphs. 
    Let $(G,\tilde{k})$ be an instance of \vc such that $G$ is a cubic graph. The idea is similar to the one used to prove Theorem~\ref{thm:nph-general}. Here, $u$ delegates to $u'$ and $u'$ approves all the candidates corresponding to the edges incident on $u$. Dummy voters in $D^e$ delegate to the voter corresponding to the edge $e$, who approves $c_e$. Figure~\ref{fig:thm1-3} illustrates the construction. 

    Next, we present the construction, in detail.  We first construct an instance of \ccra that satisfies constraints $1(a), 2$, and $3$ in the theorem statement. We will discuss later the required modifications for constraint $1(b)$. The construction is as follows. 

    \begin{itemize}
        \item For every edge $e\in E(G)$, we have a candidate $c_e$. Additionally, we have a candidate $c^\star$ who is our favorite candidate. For every $u\in V(G)$, let $E_u$ be the set of candidates corresponding to the edges incident to $u$ in $G$.
        \item For every vertex $u\in V(G)$, we have a passive voter $u$ and an active voter $u'$. The voter $u$ delegates to $u'$ and the voter $u'$ approves all the candidates in $E_u$. 
        \item  For every edge $e\in E(G)$, we have an active voter $e$ who approves the candidate $c_e$.
        \item For every edge $e\in E(G)$, we add a set of dummy voters $D^e=\{d^e_1,\ldots,d^e_{k-4}\}$. Each dummy voter in $D^e$ delegates to the active voter $e$ and the cost of this arc is $1$.
        \item  We add a special voter $v^\star$ who approves the candidate $c^\star$.
        \item The cost of all the arcs is $1$.
        \item The budget $k$ is $\tilde{k}$.
    \end{itemize}
    
Let $H$ be the delegation graph that we constructed. Note that in this graph the score of $c^\star$ is $0$ and $k$ for every other candidate. 

Next, we prove the correctness. 
 In the forward direction, suppose that $S$ is a solution to $(G,\tilde{k})$. Without loss of generality, we assume that $|S|=\tilde{k}$. If $v\in S$, then we redirect the arc $(v,v')$ to $(v,v^\star)$. Since the cost of arc $(v,v')$ is $1$, $|S|= \tilde{k}$, and $k=\tilde{k}$, the total cost of redirection is at most $k$. Now, we prove that $c^\star$ is the unique winner in the delegation graph $H^\star$ obtained after redirections. For every vertex $v\in S$, we have an arc $(v,v^\star)$ in $H^\star$. Furthermore, since $v$ is the source vertex (vertex with in-degree zero) in the delegation graph, the vote of $c^\star$ is $k$. For every edge $e(=uv) \in E(G)$, $u$ delegates to $u'$, $v$ delegates to $v'$, and $u',v'$ both approve $e$. Since either $u$ or $v$ is in $S$, the vote of candidate $c_e$ decreases by at least $1$, hence, the score in $H^\star$ is at most $k-1$ ($k-4$ votes are due to dummy voters). Thus, $c^\star$ is the unique winner in $H$.

 In the reverse direction, suppose that $H^\star$ is the delegation graph obtained after redirecting arcs within the budget such that $c^\star$ is the unique winner. Let $S$ be the set of arcs that are redirected in $H$ to obtain $H^\star$. We can safely assume that all the arcs are redirected to $v^\star$ because if an arc is redirected to some other voter say $v$, then instead of $v$, redirecting it to $v^\star$ increases the score of $c^\star$ and decrease the score of other voters. We modify $S$ a bit as follows. Suppose that an arc $(d,e)\in S$, and $(u,e) \notin S$, where $u$ is a voter corresponding to the vertex $u\in V(G)$, which is an endpoint of the edge $e\in E(G)$. Then, delete $(d,e)$ from $S$ and  add $(u,e)$ in $S$. Note that the cost of both the arcs is same, so the total cost remain same. Redirection of arc $(d,e)$ decreases the score of the candidate $c_e$ by $1$ and increase the score of $c^\star$ by $1$. The same is also achieved by redirecting the arc $(u,e)$.  Thus, $S$ is still a solution. Since the cost of each redirection is $1$, without loss of generality, we assume that $|S|=k$

 We construct a set $S'\subseteq V(G)$ as follows: if an arc $(u,e)\in S$, add the vertex $u\in V(G)$ in $S'$. Next, we prove that $S'$ is a vertex cover of $G$ of size at most $k$.  Since every redirection, increases the score of $c^\star$ by  $1$, the score of $c^\star$ is $k$. Thus, the score of other candidates is at most $k-1$. Recall that the score of each candidate, except $c^\star$, in $H$ is $k$. Thus, the score of each $c_e$, where $e\in E(G)$ reduced by at least $1$. Thus, for every $e(=uv)\in E(G)$, either $(u,u')$ is redirected or $(v,v')$ is redirected in $S$ due to our modification of the set $S$. Hence, for every $e(=uv)\in E(G)$, either $u\in S'$ or $v \in S'$. Thus, $S'$ is a vertex cover of $G$.

This completes the proof of Theorem~\ref{thm:nph-single-deleg} with constraints 1(a), 2, and 3.


   \begin{figure}[h]
    \centering
    \includegraphics[scale=0.55]{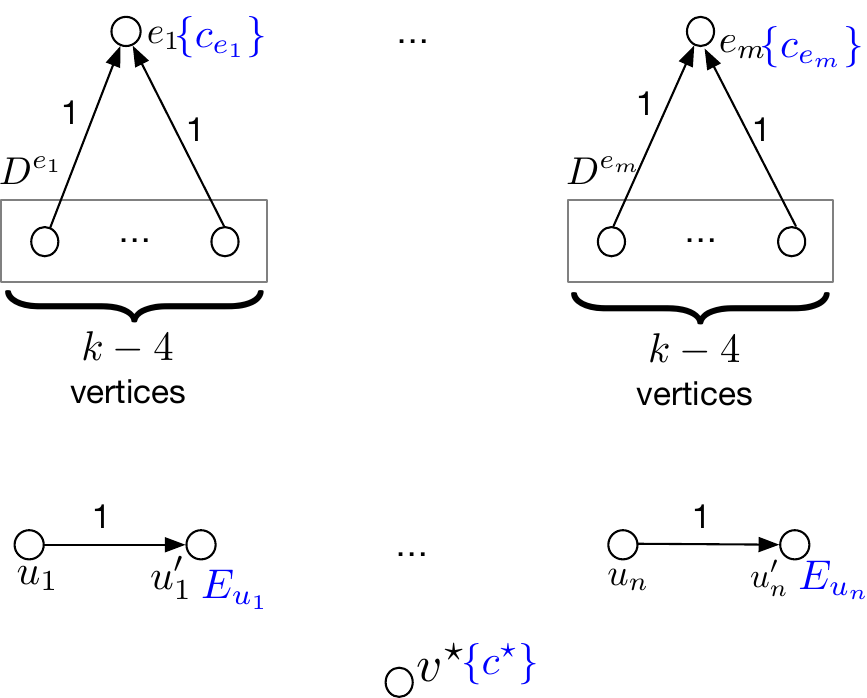}
    \caption{Illustration of {\sf NP}-hardness claimed in Theorem~\ref{thm:nph-single-deleg}. $E_u$ is the set of candidates corresponding to the edges incident to $u$ in $G$. The set of approved candidates by an active voter is written in the {\color{blue}blue} color next to the voter name. The number on an edge is the cost of redirection.}
    \label{fig:thm1-3}
\end{figure}

To prove the result for the constraint 1(b), 2, and 3, we do the same modification as in Theorem~\ref{thm:nph-general}. Note that the only active voters $e$, for every $e\in E(G)$ has large in-degree. All the other vertices have in-degree (out-degree) at most $1$. Thus, to decrease the out-degree of $e$, we add a path on the dummy voters. That is, for every $i\in [k-5], e\in E(G)$, $d^e_i$ delegates to $d^e_{i+1}$ and $d^e_{k-4}$ delegates to the voter $e$. Figure~\ref{fig:thm1-4} illustrates the modification. The cost of arc of type $(u,u')$ is $1$ and for the remaining arcs, it is $k+1$. The correctness is same as earlier. Now, we can only redirect the arcs of type $(u,u')$.

\begin{figure}[ht]
    \centering
    \includegraphics[scale=0.55]{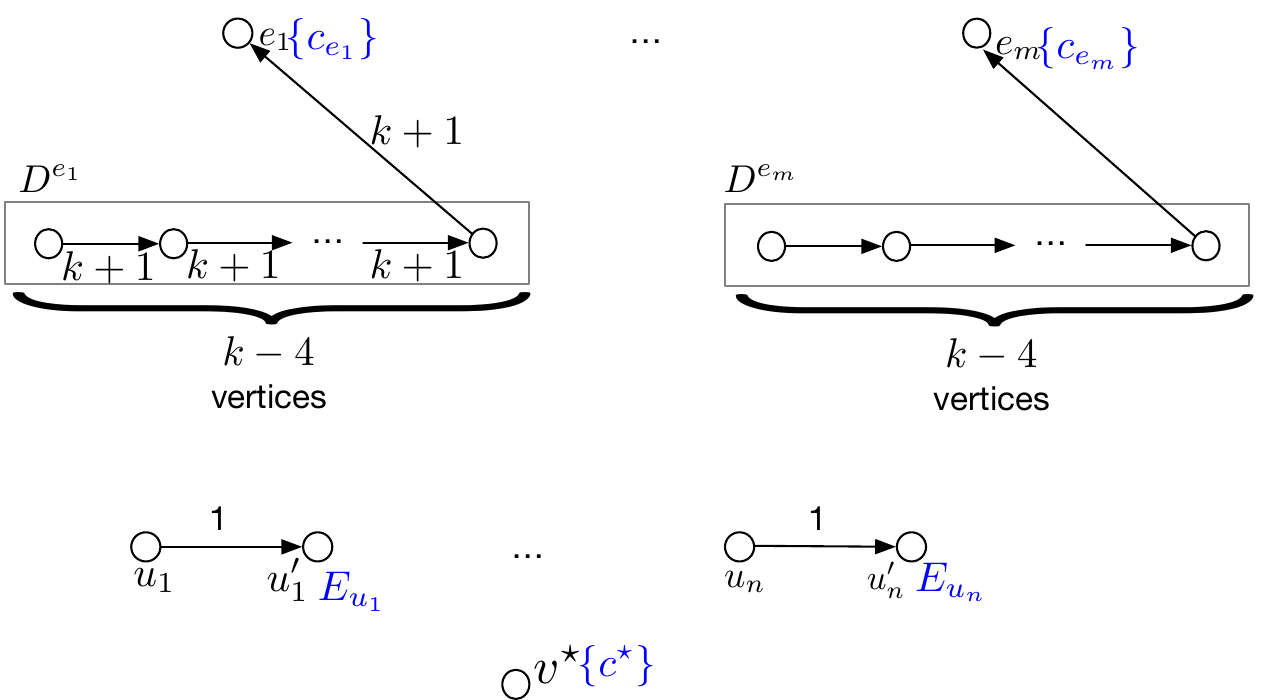}
    \caption{Modifications in Figure~\ref{fig:thm1-4} for {\sf NP}-hardness of Theorem~\ref{thm:nph-general} with constraint 1(b).}
    \label{fig:thm1-4}
\end{figure}
 
Instead of \vc, again if we give the same reduction from {\sc Hitting Set}, we obtain the claimed  {\sf W[2]}-hardness with respect to $k+\redirect$, but the size of approval set is no longer a constant. 
 
\end{proof}

The complexity of \ccra when $\deleg=1$ and $\app=2$ elude us so far. 
\section{Algorithmic Results}
In this section, we design exact, parameterized, and approximation algorithms for \ccra in various settings.

\paragraph{Preprocessing.} We perform the following simple preprocessing step on the given \ccra instance. We partition the active voters into equivalence classes based on their approval sets, i.e., for each subset $C' \subseteq C$ of candidates, let $S_{C'}$ be the set of active voters whose approval set is equal to $C'$. 
Then, for each $C'$ such that $S_{C'}$ is non-empty, we add a \emph{virtual} active voter $v_{C'}$---who does not have an actual vote, i.e., does not add to the resulting scores--- and add arcs from each original active voter $v \in S_{C'}$ to $v_{C'}$ of cost $\infty$. It is easy to see that this results in an equivalent instance of \ccra wherein the number of active voters is upper bounded by the number of subsets $C' \subseteq C$ such that $S_{C'} \neq \emptyset$, which is at most $2^m$. In the following algorithms, we work with the instances of \ccra preprocessed in this manner.

\subsection{Polynomial-Time Algorithm}
In the following theorem, we give a polynomial-time algorithm for \ccra in the single-delegation, single-approval setting.
\begin{theorem}\label{thm:ooly-single-deleg}
    \ccra is polynomial-time solvable when $\deleg = 1$ and $\app = 1$.
\end{theorem}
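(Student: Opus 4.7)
The plan is to exploit the forest structure of the single-delegation graph and reduce the problem to a tree DP followed by a knapsack-style combination. Since $\deleg=1$ and $G$ is acyclic, $G$ is a forest whose sinks are exactly the active voters; every passive voter has a unique directed path to the unique active voter of its tree, and contributes one unit to that voter's approved candidate. Redirection never changes the set of active voters, so if no active voter approves $c^\star$ then we can immediately return NO. Otherwise, fix any active voter $a^\star$ with $c(a^\star)=c^\star$.

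The next step is three structural reductions on an optimal set $U^\star$ of redirected voters: (i) every redirected arc is re-pointed directly at $a^\star$, since any other target would, at the same cost, add $\sigma_u$ to some non-$c^\star$ candidate instead of to $c^\star$, so re-pointing to $a^\star$ weakly improves every margin $f(c^\star)-f(c)$; (ii) no voter currently voting for $c^\star$ belongs to $U^\star$, as such a redirection has zero net effect on the scores and can be removed to save cost; (iii) $U^\star$ is an antichain of the forest, since whenever $u'$ is a proper descendant of $u$ in $U^\star$ (with both re-pointed to $a^\star$), deleting $u'$ strictly reduces cost without changing any score. Let $\sigma_u$ denote the size of $u$'s subtree in the original forest, and for $c\neq c^\star$ let $Y_c:=\sum_{u\in U^\star,\,c_u^{\mathrm{orig}}=c}\sigma_u$. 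The resulting scores are then $f(c^\star)=N_{c^\star}^{(0)}+W$ and $f(c)=N_c^{(0)}-Y_c$, where $W=\sum_{c\neq c^\star}Y_c$; the strict-winner condition rewrites as $W+Y_c>\Delta_c$ with $\Delta_c:=N_c^{(0)}-N_{c^\star}^{(0)}$.

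For each active voter $a$ with $c(a)\neq c^\star$, I then compute $f_a(x)$, the minimum total cost of an antichain in $T_a\setminus\{a\}$ whose subtree sizes sum to at least $x$, for every $x\in\{0,\ldots,|T_a|-1\}$. This is a standard bottom-up tree DP in $O(|T_a|^2)$ time: at each node $v$ take the better of ``cut $v$'' (pay $v$'s out-arc cost and trim $\sigma_v$ votes) and ``keep $v$ and min-plus-convolve the children's tables.'' A further min-plus convolution over all active voters approving the same candidate $c$ yields $g_c(y)$, the minimum cost of moving at least $y$ votes out of $c$'s trees.

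The outer problem is $\min\sum_{c\neq c^\star}g_c(Y_c)$ subject to $W+Y_c>\Delta_c$ for every $c\neq c^\star$. I handle the coupling by guessing $W\in\{0,1,\ldots,n\}$; for each guess the constraints collapse to $Y_c\geq\max(0,\Delta_c-W+1)$ together with $\sum_c Y_c=W$, which is a one-dimensional knapsack DP solvable in $O(mn^2)$ time per guess and $O(mn^3)$ overall. The answer is YES iff the minimum cost so obtained is at most $k$. The main obstacle I expect is justifying the three structural reductions cleanly --- especially that re-pointing to $a^\star$ dominates any other target once the set $U^\star$ is fixed --- after which the composition of tree DP and knapsack DP is routine and runs in polynomial time.
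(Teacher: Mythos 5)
Your proposal is correct and follows essentially the same route as the paper: both exploit that with $\deleg = 1$ the delegation graph is a forest of trees rooted at active voters, compute via a bottom-up tree knapsack DP the minimum cost of extracting at least $j$ votes from each tree --- your ``cut $v$ versus keep $v$ and min-plus-convolve the children'' rule is exactly the paper's $A_v$/$B_v$ recursion in Lemma~\ref{lem:simple-tree-dp}, with your antichain observation (iii) implicit in its definition of $A_v$, and your reductions (i)--(ii) implicit in its decision to redirect only into $v_t$ --- and then combine the per-tree arrays by an outer polynomial-time DP. The only cosmetic differences are that the paper enforces unique winning through a table $T[i,k,x]$ (at least $k$ votes for $c^\star$, at most $x < k$ for every rival) rather than your guessed total $W$ with per-candidate lower bounds $Y_c \ge \Delta_c - W + 1$, and it collapses active voters with identical ballots via a virtual-voter preprocessing step where you instead min-plus-convolve the tables of trees approving the same candidate.
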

\begin{proof}    
    Due to preprocessing and $\app = 1$, we know that each active voter votes for a distinct candidate and $t = m$. Let us arbitrarily number the active voters as $v_0, v_1, v_2, \ldots, v_t$ such that 
    each $v_i$ votes for candidate $c_i$, where $c_t = c^\ast$, our preferred candidate. Since $\deleg = 1$, the set of vertices currently voting for $c_i$, by the virtue of the active voter $v_i$ forms a tree, say $T_i$. First, we prove the following lemma that is used as a subroutine in the main algorithm. 

    \begin{lemma} \label{lem:simple-tree-dp}
    There exists a polynomial-time algorithm that returns, for each $1 \le i \le t$, an array $A_i$ of length $|V(T_i)|+1$, such that for each $0 \le j \le |V(T_i)|$, $A_i[j]$ is equal to the minimum-cost of a subset of arcs from $T_i$, whose redirection yields at least $j$ votes for $c^{*}$. One can also compute in polynomial-time, as the corresponding subsets of arcs realizing the minimum cost.
 \end{lemma}    
\begin{proof}
We perform a bottom-up dynamic programming on the $T_i$, which we root at the active voter $v_i$. Thus, every arc $(u, v)$ is directed from a child $u$ to its parent $v$. For each vertex $v$ with parent $w$, we associate two arrays, namely $A_v$ and $B_v$ -- since the root $v_i$ does not have a parent, we only define $B_{v_i}$ for it, which is also the final output of the algorithm. 

Let $\ell_v$ denote the number of nodes in the subtree rooted at $v$ (denoted by $T_v$), then $A_v$ has length $\ell_v+1$ and $B_v$ has length $\ell_v$, these two arrays represent the computational primitives for the dynamic steps in the algorithm. For $0 \le j \le \ell_{v}-1$, $B_v[j]$ is the minimum cost of arcs from $E(T_v)$, whose redirection yields at least $j$ votes from $V(T_v) \setminus \LR{v}$; whereas for $0 \le j \le \ell_v$, $A_v[j]$ is the minimum cost of arcs from $E(T_v) \cup \LR{(v, w)}$, whose redirection yields at least $j$ votes from $V(T_v)$.

\textbf{Leaf case.} If $u$ is a leaf, then $B_u$ is an array of length $1$, with $B_u[0] \coloneqq 0$. 
The correctness follows since the subtree does not contain any arc, which implies that it is not possible to obtain any votes from $V(T_v)$.

\textbf{Computing $A_{v}$ given $B_v$.} Let $v$ be a vertex (other than the root $v_i$) with parent $w$. Suppose we have already correctly computed the array $B_v$ of length $\ell = |V(T_v)|$. Then, we define $A_{v}[\ell] = cost((v, w))$, since redirection of edge $(v, w)$ is the only way to obtain $\ell$ votes. For all other $0 \le j \le \ell-1$, define \\
\begin{equation}
    A_v[j] = \min\LR{B_v[j], cost((v, w))}.    
\end{equation}

, since there are two possibilities to obtain $j$ votes -- either by redirecting a subset of arcs from $E(T_v)$, which is correctly computed in $B_v[j]$; or by redirecting $(v, w)$, which costs $cost((v, w))$ and yields $\ell \ge j$ votes.

\textbf{Computing $B_v$ given the arrays $A$ for the children.} Consider a vertex $v$ with children $u_1, u_2, \ldots, u_r$ (numbered arbitrarily), and for each $1 \le q \le r$, let $e_q = (u_q, v)$ be the edge between $v$ and $q$-th child. Suppose that we have computed the arrays $A_{v_q}$ for every $v_q$. For simplicity, we denote $A_{v_q}$ by $A_q$. We compute the array $B_v$ by iteratively merging the arrays for the children, as follows. First, we merge $A_{\le 1} \coloneqq A_1$ and $A_2$ to compute $B_{\le 2}$ (discussed next). Then, we merge $A_{\le 2}$ with $A_3$ to compute $A_{\le 3}$, and so on. The final output $B_v$ is equal to $A_{\le t}$ computed in this manner.

Now we discuss how to compute $A_{\le y+1}$, given an array $A_{\le y}$ of length $\ell'_y+1$, where $\ell'_y = |\bigcup_{1 \le q \le y} V(T_{u_y})|$, and an array $A_{y+1}$ of length $\ell_{y+1}+1$. This is done as follows. For each $0 \le j \le \ell'_y + \ell_{y+1} + 1$, define \\ 
\begin{equation}
     A_{\le y+1}[j] = \min_{0 \le x \le \min\LR{j, \ell'_{y}}} \LR{A_{\le y}[x]+A_{y+1}[j-x]}
 \end{equation}

Now we discuss the correctness of this computation. We inductively assume that, for each $0 \le x \le \ell'_y$, $A_{\le y}[x]$ is equal to the minimum cost of arcs from $\bigcup_{1 \le q \le y} E(T_{u_q})$ whose redirection yields at least $x$ votes from $\bigcup_{1 \le q \le y} V(T_{u_q})$. In order to obtain $j$ votes from $\bigcup_{1 \le q' \le y+1} E(T_{u_{q'}})$, we must obtain at least $0 \le x \le \min\LR{\ell'_y, j}$ votes from $\bigcup_{1 \le q \le y} V(T_{u_q})$, and remaining at least $j-x$ votes from $V(T_{u_{y+1}})$. Since we consider all possibilities for $x$, the correctness of the computation follows. It is easy to modify the algorithm to also return the corresponding solutions, we omit the details.
\end{proof}

We use the algorithm in Lemma \ref{lem:simple-tree-dp} to compute the arrays $A_i[\cdot]$ for each $0 \le i < t$ (since $v_t$ votes for $c^\ast$, we will not redirect any arcs out of tree $T_t$). We use these arrays to perform another level of dynamic programming, to solve \ccra optimally.
Our DP table is parameterized by a tuple, $(i, k, x)$, where $0 \le i < t$, and $0 \le x, k \le n$, and the corresponding entry $T[i, k, x]$ is equal to the minimum cost of arcs redirected from $T_1 \cup \ldots \cup T_i$ to get at least $k$ votes for $c^\ast$ (i.e., the arcs will be redirected into $v_t$), and leaving at most $x$ votes for the first $i$ candidates, $c_1, \ldots, c_i$.

For the base case, we define $T[0, k, x] = 0$ iff $k = x = 0$ and $T[0, k, x] = +\infty$ otherwise. Suppose we have already computed all the entries of the form $T[i-1, k, x]$. Let $l_i = |V(T_i)|+1$ denote the number of voters in the $i$-th tree. We compute the entries in the $i$-th row using the following recurrence.
\begin{equation}
    T[i,k,x] = \min_{l_i - x \le j \le l_i} \LR{A_{i}[j] + T[i-1,k-j,x]}
\end{equation}

To see the correctness of the computation, fix some $1 \le i < t$. In order to obtain at least $k$ votes from the first $i$ trees, we must obtain some $j$ votes from $T_j$, and remaining at least $k-j$ votes from the first $i-1$ trees. However, we also require that after all such redirections, first $i$ candidates are left with at most $x$ votes. It follows that $j$ must be at least $l_i - x$, and due to Lemma \ref{lem:simple-tree-dp}, $A_i[j]$ contains the minimum cost required to obtain at least $j$ votes from $T_i$. Further, due to inductive hypothesis, the entry $T[i-1, k-j, x]$ contains the minimum cost of arcs from the first $i-1$ trees to obtain at least $k-j$ votes, leaving at most $x$ votes for $c_1, \ldots, c_{i-1}$. Since we take a minimum over all such $(j, k-j)$ combinations, the correctness follows.
The final output of the algorithm is the minimum value $T[t, k, x]$ over all $0 \le x < k \le n$. 
\end{proof}

\subsection{Parameterized Algorithms}
We design the \emph{fixed-parameter tractable} (\fpt) algorithms with respect to the parameters, number of vertices ($n$), the number of active voters (t), and number of candidates ($m$). First, we have the following result.
\begin{theorem}\label{thm:fpt-n}
    \ccra is \fpt\ w.r.t. the parameter $n$.
\end{theorem}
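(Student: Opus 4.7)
The plan is to solve \ccra via direct enumeration over all possible modified delegation graphs, exploiting the fact that when $n$ is small, the entire graph structure is small. The key observation is that both the description of the graph and the space of modifications depend only on $n$, so a brute-force search already yields an FPT running time.

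First, I would observe that the delegation graph has at most $n(n-1) = O(n^2)$ arcs, and each arc $(v,u) \in E(G)$ can either be kept or redirected to $(v,u')$ for some $u' \in V \setminus \{v,u\}$. Thus there are at most $n$ possible final states per arc, so the number of distinct modified graphs reachable from $G$ via redirections is bounded by $n^{|E(G)|} \le n^{n^2} = 2^{O(n^2 \log n)}$, a function of $n$ alone. Enumerating this set exhaustively therefore costs only $f(n)$ time.

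Second, for each candidate modified graph $G'$, I would verify in polynomial time whether it corresponds to a valid solution: (i) sum the costs of the redirected arcs and check that the total is at most $k$; (ii) compute the unravelled ballots under $\mathcal{R}$---since the resulting graph is acyclic by assumption, this can be done in reverse topological order in polynomial time by applying $\mathcal{R}$ at each internal node---and (iii) apply $\mathcal{W}$ and check that $c^\star$ is the unique winner. Output YES if some enumerated $G'$ passes all three checks; otherwise output NO.

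The overall running time is $2^{O(n^2 \log n)} \cdot \mathrm{poly}(n,m)$, which is \fpt\ in $n$. No substantive obstacle arises; the only points worth verifying carefully are that the per-modification checks are genuinely polynomial (which is immediate from the explicit descriptions of the union, approval, and GreedyMRC unravellings, and of the approval voting rule $\mathcal{W}$) and that enumerating in this way indeed covers every graph reachable by a sequence of redirections from $G$.
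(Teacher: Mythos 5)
Your proposal is correct and takes essentially the same approach as the paper: both arguments brute-force over all $2^{O(n^2 \log n)}$ possible redirection outcomes (a function of $n$ alone, since there are at most $n^2$ arcs and at most $n$ destinations each) and verify each candidate in polynomial time. The only additions in the paper's version are the observation that in the single-delegation case $|E(G)| \le n$, improving the bound to $2^{O(n \log n)}$; conversely, your explicit polynomial-time verification step (cost check, unraveling in topological order, applying $\mathcal{W}$) spells out what the paper leaves implicit, with the minor caveat that a modified graph containing a delegation cycle should simply be discarded during verification rather than assumed away.
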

\begin{proof}
    The proof is by enumerating solutions and returning a minimum-cost solution found wherein our preferred candidate $c^{\ast}$ is the unique winner. Note that the number of redirected arcs in any solution is at most $n^2$, and for each redirected arc, there are at most $n$ choices for the new destination. Thus, the number of solutions is upper bounded by $2^{n^2} \cdot n^{n^2} = 2^{O(n^2 \log n)}$. In the single-delegation scenario (i.e., $\deleg = 1$), note that the number of edges in the graph is upper bounded by $n$, and thus the running time of the algorithm improves to $2^{O(n \log n)}$.
\end{proof}

Now we turn to \ccra when $\deleg = 1$ and $\app \ge 1$, and first design an \xp\ algorithm for \ccra parameterized by the number of active voters (Theorem \ref{thm:xp-active}). Subsequently, we show that in the special case of this setting, one can obtain an FPT approximation scheme (FPT-AS) for \ccra parameterized by the number of active voters (Theorem \ref{thm:fpt-as}), i.e., a $(1+\epsilon)$-approximation for the minimum cost that runs in time $f(t, \epsilon) \cdot (m+n)^{O(1)}$ where $t$ is the number of active voters. 

\begin{restatable}{theorem}{xpactive} \label{thm:xp-active}
    \ccra is \xp\ with respect to $t$, the number of active voters when $\deleg = 1$. 
\end{restatable}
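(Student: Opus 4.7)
The plan is to exploit single-delegation to view \ccra\ as a constrained partitioning problem on a forest, solvable by a tree DP whose state space is polynomial in $n$ once $t$ is fixed. Since every voter has out-degree at most one, the delegation graph is a disjoint union of in-trees $T_1, \ldots, T_t$ rooted at the active voters $v_1, \ldots, v_t$. Any budget-respecting redirection produces, after unraveling, a well-defined assignment $f \colon V \to \{v_1, \ldots, v_t\}$ of each voter to the active voter it ultimately supports, equivalently a partition $V = C_1 \sqcup \cdots \sqcup C_t$ with $v_j \in C_j$. The score of any candidate $c$ under $\mathcal{W}$ is $\sum_{j \,:\, c \in b_{v_j}} s_j$ where $s_j = |C_j|$, so whether $c^\star$ wins uniquely depends only on the size vector $(s_1, \ldots, s_t)$. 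Moreover, the cheapest redirection realizing such an $f$ costs exactly $\sum_{u \text{ passive}} cost((u, N_G(u))) \cdot \mathbf{1}[f(u) \neq f(N_G(u))]$, where $N_G(u)$ is $u$'s original out-neighbor: every cross-class arc is redirected into the sink $v_{f(u)}$ (which cannot close a cycle) and every within-class arc can be kept.

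With this equivalence in hand, the algorithm computes a single table $D[(s_1, \ldots, s_t)]$ giving the minimum partition cost for each size vector, and returns the minimum entry over vectors for which $c^\star$ is the unique winner under $\mathcal{W}$. The table $D$ is built by a rooted tree DP per tree, followed by a cross-tree combination. Specifically, for each tree $T_i$ and each node $u \in T_i$ maintain entries $D_i[u][j][\mathbf{y}]$ giving the minimum cost of cut arcs inside the subtree rooted at $u$ under the assumption that $u$ lies in class $j$ and the subtree contributes size vector $\mathbf{y} \in \mathbb{Z}_{\ge 0}^t$; at the root $u = v_i$ we pin $j = i$. Leaves receive zero cost on unit vectors $e_j$, and internal nodes are processed by merging children one at a time via a min-plus convolution in the size-vector coordinates, charging $cost((w, u))$ whenever a child $w$ is placed in a class different from $u$'s. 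A final min-plus convolution across the $t$ tables $D_i[v_i][i][\cdot]$ yields $D$. There are $O(n)$ merges total, each costing $n^{O(t)}$ (the number of relevant size vectors is $\binom{n+t-1}{t-1} = n^{O(t)}$, and merging two tables requires one convolution), so the overall running time is $n^{O(t)}$, which is \xp\ in $t$.

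The main obstacle is justifying the cost formula, i.e., showing that any class assignment $f$ is realizable at exactly the cut-cost with no extra penalty from cycle avoidance or from fitting the per-class graph into an in-tree. This is the observation already described: within each class $C_j$, the induced subgraph $G[C_j]$ is acyclic and each vertex has out-degree at most one, so the kept arcs form a forest whose non-$v_j$ roots can each be rerouted directly into the sink $v_j$, turning the class into an in-tree rooted at $v_j$ without creating any cycle. Once this cost characterization is in place, the base case, child-merging transition, and cross-tree combination of the DP are routine, delivering the claimed \xp\ running time.
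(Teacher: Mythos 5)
Your proposal is correct, and it takes a noticeably different route from the paper's proof, with a better running time. The paper also does a per-tree dynamic program, but its table is indexed by a multiset of vote \emph{demands} (how many votes each tree must ship to each other active voter), and it then combines the trees by brute-force guessing all inter-tree transfer quantities $n(i,j)$, of which there are $n^{O(t^2)}$, giving an overall $n^{O(t^2)}$ algorithm. You instead make explicit a structural lemma the paper leaves implicit: any valid redirection induces an assignment $f$ of voters to active voters, the outcome depends only on the class-size vector, and the cheapest redirection realizing $f$ costs exactly the cut cost $\sum_{u} cost((u,N_G(u)))\cdot \mathbf{1}[f(u)\neq f(N_G(u))]$ --- with the lower bound coming from transitivity of delegation (a kept arc forces $f(u)=f(N_G(u))$) and the upper bound from rerouting every cut arc directly into the sink $v_{f(u)}$, which can never close a cycle. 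This reduction to a partition problem lets you run one DP over class-size vectors $\mathbf{y}\in\mathbb{Z}_{\ge 0}^t$ and replace the paper's guessing stage with min-plus convolutions, both within trees (child merges) and across trees, for a total of $n^{O(t)}$ time rather than $n^{O(t^2)}$. Your cycle-avoidance argument is complete: kept arcs stay within a class and inherit acyclicity from the original forest, so every chain of kept arcs inside $C_j$ terminates either at $v_j$ or at a redirected arc pointing directly at $v_j$, and hence every assignment with $v_j\in C_j$ is realizable at exactly the cut cost. In short, your approach buys a cleaner correctness argument (the cost characterization does the work the paper's ``guess and look up'' step does) and a quadratic-to-linear improvement in the exponent; the paper's formulation, by tracking decompositions $A_1\uplus\cdots\uplus A_{t'}$ of redirected arc sets, is closer in spirit to reusing Lemma~\ref{lem:simple-tree-dp} verbatim but pays for it in the combination step.
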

\begin{proof}[Proof sketch]
    For each active voter $v_i$, $1 \le i \le t$, let $T_i$ be the corresponding tree and let $l_i = |V(T_i)|$ denote the number of voters in the tree.

    Fix $1 \le i \le t$. We generalize the dynamic programming algorithm from Lemma \ref{lem:simple-tree-dp} to first compute the answers to the subproblems of the following form. Let $S = \LR{v_1, v_2, \ldots, v_{t'}}$ be a multiset of size $1 \le t' \le t$ (with repetitions) such that $v_j \in \LR{1, 2, \ldots, l_i}$. Then, let $T[i, S]$ denote the minimum cost of deleting a subset of arcs $A$ from $T_i$ such that, $A$ can be decomposed into $A_1 \uplus A_2 \uplus \ldots A_{t'}$, where the total number of votes obtained from the set $A_j$ is at least $v_j$. 

    For each non-root vertex $v \in T_i$ with parent $w$, and for each multiset $S = \LR{n_1, n_2, \ldots, n_{t'}}$, we define a subproblem $T_{v}[S]$, denoting the minimum cost of arcs $A \subseteq E(T_v) \cup \LR{(v, w)}$ that can be decomposed as $A_1 \uplus A_2 \ldots \uplus A_{t'}$, where for each $1 \le j \le t'$ the number of votes obtained by redirecting the arcs in $A_j$ is at least $n_j$. For the base case, corresponding to a leaf vertex $v$, we define $T_{v}[S] = w(e)$ iff $S = \LR{1}$, and for all other multisets we define $T_e[S] = \infty$. For the root/active voter $v_i$, the definition is analogous except that the set of arcs $A$ is a subset of $E(T_{v_i})$, since $v_i$ does not have a parent.

    Next, consider a vertex $v$ with children $u_1, u_2, \ldots, u_q$ (numbered arbitrarily), and parent $w$. Let $e_j = (u_j, v)$. Suppose for each $1 \le j \le q$, we have correctly computed the tables $T_{u_j}[S]$ for all children $u_j$ and for all multisets $S$. Let $T_{\le 1}[\cdot] = T_{u_1}[\cdot]$. Fix a particular multiset $S = \LR{n_1, n_2, \ldots, n_{t'}}$. We say that $(S_1, S_2)$ is a valid partition of $S$ iff $S_1 = \LR{a_1, a_2, \ldots, a_{t'}}$ and $S_2 = \LR{b_{1}, b_{2}, \ldots, b_{t'}}$ satisfy for each $1 \le j \le t'$, $n_j = a_{j} + b_{j}$ with $0 \le a_{j}, b_{j} \le n_j$. Then, we compute $T_{\le j}[S]$ using the following recurrence: 
    $T_{\le j}[S] = \min \LR{T_{\le j-1}[S_1] + T_{u_j}[S_2]}$, where the minimum is taken over all valid partitions $(S_1, S_2)$ of the set $S$. Finally, once we have computed the table $T_{\le q}[S]$, we compute the table $T_{v}[S]$ as follows (this step is omitted for the root). $T_{vv'}[\LR{n_{v}}] = w(vv')$ where $n_{v}$ denotes the number of vertices in the subtree rooted at $v$ (including $v$). For all other multisets $S$, we define $T_{vv'}[S] = T_{\le q}[S]$. Finally, the table $T[i, S]$ is equal to the table computed for the root $v_i$. It is easy to see that the running time of the algorithm is $|n_i|^{O(t)}$, and the correctness of this computation can be argued in a similar way to that in Lemma \ref{lem:simple-tree-dp}.

    Suppose that we have computed the tables $T[i, S]$ for each $i$ and each multiset $S$. Now, fix an optimal solution. For each $1 \le i \neq j \le t$, let $n(i, j)$ denote the number of votes redirected by the optimal solution from $T_i$ into $v_j$. We guess the numbers $n(i, j)$ -- note that the number of guesses is upper bounded by $n^{t^2}$. Fix one such guess. For each $1 \le i \le t$, define the multiset $S_i = \LR{n(i, j): 1 \le j \le t, j \neq i}$, and we look up the optimal cost of redirecting a subset of arcs from $T_i$ to get the votes as given in $S_i$ using the entry $T[i, S_i]$. The total cost corresponding to this guess is defined as $\sum_{i = 1}^t T[i, S_i]$. Finally, we return the minimum solution found over all guesses. It follows that the algorithm runs in time $n^{O(t^2)}$ and returns an optimal solution.
\end{proof}

Now we consider the special case of \ccra in the single delegation, multi-approval setting, where our preferred candidate $c^\ast$ appears \emph{separately} from the rest of the candidates. More specifically, we assume that there exists an active voter, say $v_t$, whose approval set is equal to $\LR{c^\ast}$, and the approval sets of other active voters do not contain $c^\ast$ -- note that they may be arbitrary subsets of $C \setminus \LR{c^\ast}$. At an intuitive level, this setting is easier to handle due to the fact that, in an optimal solution, each redirected edge is redirected into $v_t$. We proceed to the following theorem that gives an \fpt\ approximation scheme ({\sf FPT-AS}).

\begin{restatable}{theorem}{fptas} \label{thm:fpt-as}
    \ccra with $\deleg = 1$ and $\app \ge 1$, admits an {\sf FPT-AS} parameterized by $t$ and $\epsilon$ in the special setting, where $c^\ast$ appears separate from the rest of the candidates. That is, in this setting, for any $0 \le \epsilon \le 1$, there exists an algorithm that runs in time $(t/\epsilon)^{O(t)} \cdot (m+n)^{O(1)}$ and returns a solution of cost at most $(1+\epsilon)$ times the optimal redirection cost. 
\end{restatable}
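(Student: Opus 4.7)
The plan is to reformulate the problem as a separable integer optimization in $t-1$ variables and then discretize in the cost space (rather than the value space). For each tree $T_i$ with $i \neq t$, I would invoke Lemma~\ref{lem:simple-tree-dp} to precompute in polynomial time the function $f_i(k)$ giving the minimum cost of a set of arcs in $T_i$ whose redirection causes at least $k$ voters in $V(T_i)$ to (transitively) reach $v_t$, for every $0 \le k \le |V(T_i)|$. Since $c^\star \in b_t$ only, sending any redirected arc to a destination outside $V(T_t)$ cannot help $c^\star$, so in an optimal solution every redirection may be assumed to target $v_t$. The problem then reduces to choosing $k_i \in \{0, 1, \ldots, |V(T_i)|\}$ for each $i \neq t$ so as to minimize $\sum_{i \neq t} f_i(k_i)$ subject to, for every $c \neq c^\star$,
\[
|V(T_t)| + \sum_{i \neq t} k_i \;>\; \sum_{i \neq t,\; c \in b_i} (|V(T_i)| - k_i).
\]
The two monotonicity properties---$f_i$ is non-decreasing in $k_i$, and every feasibility constraint is non-decreasing in every $k_i$---are precisely what will make the cost-space discretization work.

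The algorithm first guesses $Z$ with $\mathrm{OPT} \le Z \le (1+\epsilon/2)\mathrm{OPT}$ by trying all $Z \in \{(1+\epsilon/2)^j : 0 \le j \le O(\log C_{\max}/\epsilon)\}$ (and the case $\mathrm{OPT}=0$ separately), where $C_{\max}$ upper-bounds the total arc cost. For each $Z$, set $B := \max\{1, \lceil \epsilon Z / (4t) \rceil\}$ and form the cost grid $\mathcal{B} := \{jB : 0 \le j \le \lceil Z/B \rceil + 1\}$, which has size $O(t/\epsilon)$ in every case. I would then enumerate every tuple $(c_1, \ldots, c_{t-1}) \in \mathcal{B}^{t-1}$, and for each tuple set $\hat k_i := \max\{k : f_i(k) \le c_i\}$ by binary search on the precomputed $f_i$. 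A direct polynomial-time computation of all candidate scores under $(\hat k_1, \ldots, \hat k_{t-1})$ decides whether $c^\star$ is the unique winner; I then track the minimum actual cost $\sum_i f_i(\hat k_i)$ over all feasible tuples across all guesses. The total running time is $O(\log C_{\max}/\epsilon) \cdot (t/\epsilon)^{O(t)} \cdot \mathrm{poly}(m+n) = (t/\epsilon)^{O(t)} \cdot (m+n)^{O(1)}$.

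For the approximation guarantee, I would fix an optimum $(k_i^\ast)$ with $c_i^\ast := f_i(k_i^\ast)$ and pick the $Z$ satisfying $\mathrm{OPT} \le Z \le (1+\epsilon/2)\mathrm{OPT}$. Rounding up to $\hat c_i := \lceil c_i^\ast/B \rceil B \in \mathcal{B}$ gives $\hat c_i \ge c_i^\ast$ with $\hat c_i - c_i^\ast \le B$; if $B = 1$ then $\hat c_i = c_i^\ast$ (both integral), while if $B \ge 2$ then $\epsilon Z/(4t) > 1$ and $(t-1)B \le \epsilon Z/4 + t \le \epsilon Z/2$, so in either case $\sum_i \hat c_i \le \mathrm{OPT} + \epsilon Z/2 \le (1+\epsilon)\mathrm{OPT}$ for $\epsilon \le 1$. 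Monotonicity of $f_i$ yields $\hat k_i \ge k_i^\ast$ from $\hat c_i \ge c_i^\ast$, and monotonicity of the feasibility constraints in each $k_i$ then certifies that the enumerated tuple $(\hat k_1, \ldots, \hat k_{t-1})$ is feasible, with true cost $\sum_i f_i(\hat k_i) \le \sum_i \hat c_i \le (1+\epsilon)\mathrm{OPT}$. The main obstacle I anticipate is the choice of \emph{what} to discretize: naively bucketing the $k_i$ themselves fails because $f_i$ is only a step function that may jump by an arbitrary amount between two neighbouring $k$-values, whereas bucketing the per-tree \emph{cost} $c_i$ preserves $\hat k_i \ge k_i^\ast$ and bounds the overall approximation error cleanly by $(t-1)B \le \epsilon \mathrm{OPT}$.
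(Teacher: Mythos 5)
Your proof is correct, and it keeps the paper's overall skeleton -- decompose the forest into the trees $T_i$, use Lemma~\ref{lem:simple-tree-dp} as the per-tree cost oracle, note that in the special setting every redirection may be assumed to target $v_t$ and that both the objective and the winning constraints are monotone in the per-tree vote counts, then enumerate discretized per-tree budgets -- but your discretization is genuinely different from the paper's. The paper guesses the most expensive edge (cost $w$) of an optimal solution, so that $w \le \mathrm{OPT} \le nw$, lifts each per-tree spend to at least $\epsilon w/2n$, and rounds to powers of $1+\epsilon/3$; this gives a per-tree grid of size $O(\log n/\epsilon^2)$, hence $(\log n/\epsilon)^{O(t)}$ guesses, which must then be massaged into $(t/\epsilon)^{O(t)} \cdot n^{O(1)}$ via a case distinction on whether $t \le \frac{\log n}{\log \log n}$. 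You instead guess $\mathrm{OPT}$ multiplicatively and lay down a uniform \emph{additive} grid of step $B \approx \epsilon Z/(4t)$, so the per-tree grid has size $O(t/\epsilon)$ outright and the $(t/\epsilon)^{O(t)}$ bound falls out with no case analysis. Your error accounting is airtight (the $B=1$ case is closed by integrality of the costs, and $B \ge 2$ forces $t < \epsilon Z/4$, giving $(t-1)B \le \epsilon Z/2$ and overall factor $1+3\epsilon/4 \le 1+\epsilon$), as is the chain $\hat c_i \ge c_i^\ast \Rightarrow \hat k_i \ge k_i^\ast \Rightarrow$ feasibility via monotonicity of the constraints; your explicit formulation of those constraints, and your closing observation that one must bucket per-tree \emph{costs} rather than the vote counts $k_i$ (since $f_i$ can jump arbitrarily), make precise what the paper leaves implicit.

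One wrinkle worth fixing: your outer search over $Z \in \{(1+\epsilon/2)^j\}$ takes $O(\log C_{\max}/\epsilon)$ iterations, which is polynomial in the encoding length of the cost function but not literally $(m+n)^{O(1)}$ when costs are large binary numbers; the paper avoids any dependence on $\log C_{\max}$ by anchoring at the most expensive edge, which yields at most $n$ outer guesses. The repair is immediate and compatible with everything else you do: first guess the most expensive edge of the optimal solution as in the paper, conclude $\mathrm{OPT} \in [w, nw]$, and then run your geometric search for $Z$ only over the $O(\log n/\epsilon)$ powers of $1+\epsilon/2$ in that interval. With that substitution your algorithm and analysis go through verbatim and give the claimed $(t/\epsilon)^{O(t)} \cdot (m+n)^{O(1)}$ bound cleanly.
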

\begin{proof}
    First, by iterating over all edges, we ``guess'' the most expensive edge in the optimal solution. Consider one such guess corresponding to edge $e$ with cost $w$. In the following, we describe the algorithm assuming this guess is correct, i.e., $w$ is indeed the weight of the most expensive edge in the optimal solution. Then, we want to find a solution that only contains edges of cost at most $w$, which implies that $OPT$, the cost of the optimal solution is at least $w$ and at most $n \cdot w$ (recall that $\deleg = 1$, so the total number of edges in $G$ is at most $n$).

    Next, for each active voter $v_i \neq v_1$, let $w_i$ denote the total cost of the edges redirected from the tree $T_i$ into $v_1$ in the optimal solution. Note that $\sum_i w_i = OPT$. Let $w'_i = \max\LR{w_i, \epsilon w/2n}$. Note that $\epsilon w/n \le w'_i \le w n$, and $\sum_{i = 2}^t w'_i \le (1+\epsilon/2) \cdot OPT$, where $OPT$ denotes the cost of an optimal solution. Next, let $p_i = \lceil \log_{1+\epsilon}(w'_i) \rceil$, i.e., $(1+\epsilon/3)^{p_i}$ is the smallest power of $1+\epsilon$ that is at least $w'_i$. Note that for each $2 \le i \le t$, $\log_{1+\epsilon/3}(\epsilon w/2n) \le p_i \le \log_{1+\epsilon/3}(nw)+1$, which implies that all the value of $p_i$ lies in the interval of range $r = \log_{1+\epsilon/3} \lr{\frac{nw}{\epsilon w/2n}} = O(\frac{\log n}{\epsilon^2})$.

    Next, we guess the value of $p_i$ for each $2 \le i \le t$ -- note that the number of guesses is at most $r^{t-1}$. Let $p'_2, p'_3, \ldots, p'_t$ be the guessed values. Then, we use the polynomial-time algorithm in Lemma \ref{lem:simple-tree-dp} to find, for each $2 \le i \le t$, the largest number of votes that can be obtained from the tree $T_i$ using cost at most $(1+\epsilon/3)^{p'_i}$, and redirect such edges to $v_1$. After doing this for all $2 \le i \le t$, we check whether $c^\ast$ is the unique winning candidate. We return the minimum-cost solution found over all guesses for $w$ and all guesses for $p'_i$.  Note that when we correctly guess the values of $p_i$, it holds $w_i \le (1+\epsilon/3)^{p_i}$, thus, the maximum number of votes from $T_i$ that can be obtained using a budget of $(1+\epsilon/3)^{p_i}$ is at least the number of votes that can be obtained using a budget of $w_i$. Thus, in the iteration corresponding to the correct guess, when each guessed value is indeed equal to $p'_i$ , the number of votes obtained for $c^\ast$ is no fewer than that in the optimal solution (since from each tree $T_i$, the number of votes obtained by a solution of cost $(1+\epsilon/3)^{p'_i}$ is no fewer than that of cost $w_i$). Furthermore, the cost of the combined solution thus obtained, is at most $(1+\epsilon/2) \cdot (1+\epsilon/3) \le (1+\epsilon)$ times the cost of an optimal solution.

    Finally, we argue about the running time, which is dominated by the number of guesses for the values of $p_i$. The total number of guesses is at most $\lr{\frac{\log n}{\epsilon}}^{O(t)}$. Now we consider two cases. If $t \le \frac{\log n}{\log \log n}$, then the previous quantity is at most $\frac{1}{\epsilon^{O(t)}} \cdot n^{O(1)}$. Otherwise, if $t > \frac{\log n}{\log \log n}$, then $\lr{\frac{\log n}{\epsilon}}^{O(t)} \le \lr{\frac{t}{\epsilon}}^{O(t)}$. Thus, in either case, we can upper bound the number of guesses, and in turn the running time of the algorithm by $(t/\epsilon)^{O(t)} \cdot (m+n)^{O(1)}$. 

\end{proof}
Recall that our preprocessing step bounds the number of active voters $t$ by the number of distinct approval sets, which is at most $2^m$. Thus, Theorem \ref{thm:xp-active} and Theorem \ref{thm:fpt-as} immediately result in the following corollary.

\begin{sloppypar}
\begin{corollary} \label{cor:xp-m}
    Consider \ccra when $\deleg = 1$ and $\app \ge 1$. In this setting, the problem is \xp\ w.r.t. $m$ (the number of candidates), and admits an FPT-AS parameterized by $m$ and $\epsilon$ in the special case when $c^\ast$ appears separate from the rest of the candidates.
\end{corollary}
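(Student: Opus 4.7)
The plan is to invoke the preprocessing step (described just before Theorem~\ref{thm:ooly-single-deleg}) together with Theorem~\ref{thm:xp-active} and Theorem~\ref{thm:fpt-as}, and simply track how the bound on $t$ in terms of $m$ propagates through the running times. First, I would apply the preprocessing to the given \ccra instance. Since each active voter is uniquely determined (up to equivalence) by the subset of $C$ that it approves, and there are only $2^m$ such subsets, the preprocessing produces an equivalent instance with at most $t \le 2^m$ active voters. Note that this replacement preserves the parameters $\deleg = 1$ and $\app \ge 1$, and it also preserves the separability of $c^\ast$ (if $c^\ast$ appeared separately, the virtual voter for the class $\{c^\ast\}$ is still separated from virtual voters whose approval sets avoid $c^\ast$).

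Next, I would run the algorithm of Theorem~\ref{thm:xp-active} on the preprocessed instance. That algorithm has running time $n^{O(t^2)}$, so substituting $t \le 2^m$ gives a running time of $n^{O(4^m)}$, which is polynomial in the input size for every fixed $m$. This establishes the \xp\ claim with respect to $m$.

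For the approximation claim, I would simply apply Theorem~\ref{thm:fpt-as} in the special setting where $c^\ast$ is separated from the other candidates. That algorithm runs in $(t/\epsilon)^{O(t)} \cdot (m+n)^{O(1)}$ time and produces a $(1+\epsilon)$-approximation. Substituting $t \le 2^m$ yields running time $(2^m/\epsilon)^{O(2^m)} \cdot (m+n)^{O(1)}$, which is of the form $f(m,\epsilon) \cdot (m+n)^{O(1)}$ for a computable function $f$, so we obtain an FPT-AS parameterized by $m$ and $\epsilon$ as required.

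There is essentially no obstacle here: the corollary is a direct consequence of the two theorems combined with the bound $t \le 2^m$ produced by preprocessing. The only thing to be slightly careful about is verifying that the separability hypothesis of Theorem~\ref{thm:fpt-as} is not destroyed by preprocessing, which is immediate because voters with approval set exactly $\{c^\ast\}$ are grouped into a single virtual voter whose approval set is again $\{c^\ast\}$, while every other virtual voter inherits an approval set that is a subset of $C \setminus \{c^\ast\}$.
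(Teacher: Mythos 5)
Your proposal is correct and follows exactly the paper's own route: the corollary is stated as an immediate consequence of the preprocessing bound $t \le 2^m$ combined with Theorem~\ref{thm:xp-active} and Theorem~\ref{thm:fpt-as}, which is precisely what you do. Your additional check that preprocessing preserves $\deleg = 1$ and the separability of $c^\ast$ (since the virtual voter for the class $\LR{c^\ast}$ again approves only $c^\ast$) is a detail the paper leaves implicit, and it is argued correctly.
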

\end{sloppypar}

\section{Outlook}

As liquid democracy is gaining more attention in different applications -- as well as more real-world usage, including in high-stakes ones -- it is important to study different aspects of it, including the possibility of external agents controlling and rigging elections performed using liquid democracy.
In this context, here we considered a particular form of election control for liquid democracy by redirecting arcs. 
Below we describe few future research directions that we view as important and promising:
(1) A natural expansion of our work may be to consider not only single-winner elections but also multiwinner elections; and, related, to consider other unraveling functions $\mathcal{R}$ and other voting rules $\mathcal{W}$.
(2) A different generalization of our work is to consider general underlying social graphs, as described in Remark~\ref{remark:arbitrarygraph}.
(3) An immediate future research direction is to consider other forms of election control and bribery, e.g. do not redirect arcs but change the delegation graph in other ways.
(4) Another future research direction would be to study the type of control we study here but from a more practical point of view, e.g., by performing computer-based simulations to estimate the feasibility of successfully controlling a real world election by redirecting liquid democracy arcs.

\bibliographystyle{siam}
\bibliography{aamas24}

\end{document}